\newcommand{\figdraft}{false}%
\newcommand{\figfile}[1]{#1}%
\theoremstyle{plain}%
\newtheorem{theorem}{Theorem}[]%
\newtheorem{corollary}[theorem]{Corollary}%
\newtheorem{lemma}[theorem]{Lemma}%
\newtheorem{assumption}[theorem]{Assumption}%
\theoremstyle{definition}
\newtheorem{remark}[theorem]{Remark}%
\newcommand{\dblidx}[2]{_{#1,\,#2}}
\newcommand{\dblprm}[2]{{#1,\,#2}}
\newcommand{\iu}{\mathtt{i}}
\newcommand{\mhexp}[1]{{{\mathtt{e}}^{#1}}}
\newcommand{\phase}{{\varphi}}
\newcommand{\Rset}{{\mathbb{R}}}
\newcommand{\Zset}{{\mathbb{Z}}}
\newcommand{\cointerval}[2]{[#1,\,#2)}%
\newcommand{\oointerval}[2]{(#1,\,#2)}%
\newcommand{\ccinterval}[2]{[#1,\,#2]}%
\newcommand{\Do}[1]{{o\at{#1}}}
\newcommand{\nDO}[1]{{O\nat{#1}}}
\newcommand{\nDo}[1]{{o\nat{#1}}}
\newcommand{\ini}{{\rm ini}}
\newcommand{\tdots}{{...}}%
\newlength{\mhpicDwidth}
\newlength{\mhpicDvsep}
\newlength{\mhpicDhsep}
\newlength{\mhpicPwidth}
\newlength{\mhpicPvsep}
\newlength{\mhpicPhsep}
\newlength{\mhpicWhsep}
\newcommand{\pair}[2]{{\left({#1},\,{#2}\right)}}
\newcommand{\skp}[2]{{\left\langle{#1},\,{#2}\right\rangle}}
\newcommand{\bpair}[2]{{\big({#1},\,{#2}\big)}}
\newcommand{\at}[1]{{\left({#1}\right)}}
\newcommand{\nat}[1]{(#1)}
\newcommand{\bat}[1]{{\big(#1\big)}}
\newcommand{\ul}[1]{\underline{#1}}
\newcommand{\T}{\textstyle}
\newcommand{\bigpar}{\par\quad\newline\noindent}
\newcommand{\norm}[1]{\|{#1}\|}
\newcommand{\abs}[1]{\left|{#1}\right|}
\newcommand{\babs}[1]{\big|{#1}\big|}
\newcommand{\dint}[1]{\,\mathrm{d}#1}
\newcommand{\al}{{\alpha}}
\newcommand{\eps}{{\varepsilon}}
\newcommand{\ka}{{\kappa}}
\newcommand{\la}{{\lambda}}
\newcommand{\si}{{\sigma}}
\newcommand{\calC}{\mathcal{C}}
\newcommand{\calE}{\mathcal{E}}
\newcommand{\calF}{\mathcal{F}}
\newcommand{\calH}{\mathcal{H}}
\newcommand{\calL}{\mathcal{L}}
\newcommand{\calN}{\mathcal{N}}
\newcommand{\calP}{\mathcal{P}}
\newcommand{\calS}{\mathcal{S}}
\newcommand{\calW}{\mathcal{W}}
\begin{document}%
%
%
%-------------------------------------------------------------------------------------------
\title{Homoclinic standing waves in focusing DNLS equations%
\\\emph{Variational approach via constrained optimization}%
}%
\date{\today}%
\author{%
Michael Herrmann%
\thanks{
    Department of Mathematics, Saarland University,
    66123 Saarbr\"ucken, Germany \newline EMail:
    {\tt{michael.herrmann@math.uni-sb.de}}
}%
}%
\maketitle
%-------------------------------------------------------------------------------------------
%
%
% \def\theequation{\thesection.\arabic{equation}}
%
%-------------------------------------------------------------------------------------------
%                      abstract
%-------------------------------------------------------------------------------------------
\begin{abstract}%
We study focusing discrete nonlinear Schr\"{o}dinger equations and present a new variational
existence proof for homoclinic standing waves (bright solitons). Our approach relies on the
constrained maximization of an energy functional and provides the existence of two
one-parameter families of waves with unimodal and even profile function for a wide class of
nonlinearities. Finally, we illustrate our results by numerical simulations.
\end{abstract}%
%
%-------------------------------------------------------------------------------------------
%                      MSC and keywords
%-------------------------------------------------------------------------------------------
%
\quad\newline\noindent%
\begin{minipage}[t]{0.15\textwidth}%
Keywords: %
\end{minipage}%
\begin{minipage}[t]{0.8\textwidth}%
\emph{discrete nonlinear Schr\"{o}dinger equation (DNLS)}, \\%
\emph{homoclinic standing waves}, %
\emph{solitary waves}, %
\emph{bright solitons}, %
\emph{breathers}, \\%
\emph{nonlinear lattice waves, constrained maximization} %
\end{minipage}%
\medskip
\newline\noindent
\begin{minipage}[t]{0.15\textwidth}%
MSC (2000): %
\end{minipage}%
\begin{minipage}[t]{0.8\textwidth}%
37K60, %Lattice dynamics (in 37Kxx=Infinite-dimensional Hamiltonian systems)
47J30, %Variational methods (in 47Jxx=equations and inequalities involving nonlinear operators)
78A40 %Waves and radiation (in 78-XX Optics, electromagnetic theory)
\end{minipage}%
%
%
%
%-------------------------------------------------------------------------------------------
%                      table of contents
%-------------------------------------------------------------------------------------------
%
%
%\setcounter{tocdepth}{5} %
%\setcounter{secnumdepth}{4}
%{\scriptsize{\tableofcontents}}%

%-------------------------------------------------------------------------------------------
%                      body
%-------------------------------------------------------------------------------------------
%
%%
%

%-------------------------------------------------------------------------------------------
\section{Introduction}%\label{sec:intro}
%-------------------------------------------------------------------------------------------
%
The discrete nonlinear Schr\"{o}dinger equation (DNLS) is one of the most fundamental lattice
equations and plays a prominent role in the theories of optical waveguides, photorefractive
crystals, and Bose-Einstein condensates. For an overview on applications and results we refer
the reader to \cite{KRB01,EJ03,Kev:DNLS:Ch1,Por09} and the references therein. In this paper
we aim in contributing to the mathematical theory of DNLS by giving a new variational
existence proof for homoclinic standing waves.
\bigpar
In one space dimension the homogeneous DNLS is given by
\begin{align}%
\label{Eqn:DNLS0}
\iu\dot{A}_j+\al\at{A_{j+1}+A_{j-1}-2A_j}+\beta{A_j}+\Psi^\prime\nat{\abs{A}^2_j}A_j=0,
\end{align}
where $j$ is the discrete space variable, $t$ is the time, and $A_j=A_j\at{t}$ is the
complex-valued amplitude or the value of the wave function. The potential function $\Psi$ is
often assumed to be monomial with $\Psi^\prime\nat{x}=x$ and $\Psi^\prime\nat{x}=x^2$ for
the cubic and quintic DNLS, respectively.
\par
For our purposes it is more convenient to simplify the linear terms by means of the gauge
invariance of DNLS, that is the symmetry under $u_j\rightsquigarrow\mhexp{\iu\phase}u_j$ with
$\phase\in\Rset$. More precisely, with $A_j\rightsquigarrow\mhexp{\iu\at{\beta-2\alpha}t}A_j$
we readily verify that \eqref{Eqn:DNLS0} is equivalent to

\begin{align}
\label{Eqn:DNLS1}
\iu\dot{A}_j+\al\at{A_{j+1}+A_{j-1}}+\Psi^\prime\nat{\abs{A}^2_j}A_j=0.
\end{align}
It is well established that the dynamical properties of \eqref{Eqn:DNLS1} strongly depend on
the sign and strength of the coupling parameter $\alpha$. For convex $\Psi$, one usually
refers to $\alpha>0$ and $\alpha<0$ as the \emph{focusing} and \emph{defocusing} case,
respectively. In the anti-continuum limit $\alpha\to0$ the DNLS becomes a system of uncoupled
oscillators, whereas in the continuum limit $\alpha\to\pm\infty$ the DNLS can be viewed as a
finite difference approximation of the nonlinear Schr\"{o}dinger PDE (via the scaling
$j\rightsquigarrow\sqrt{\abs{1/\alpha}}j$).
\bigpar
During the last decades a great deal of attention has been paid to coherent structures such
as travelling waves and standing waves. Standing waves can be regarded as relative equilibria
which stem from the gauge invariance. They are special solutions to \eqref{Eqn:DNLS1} with
$A_j\at{t}=\mhexp{\iu\si{t}}u_j\at{t}$, where the profile $u=\at{u_j}_j$ is assumed to take
values in $\Rset$ and satisfies
\begin{align}%
\label{Eqn:StandingWaveEquation}
\si{u}_j=\al\at{u_{j+1}+u_{j-1}}+\Psi^\prime\at{u_j^2}u_j.
\end{align}
Standing waves come in different types: Periodic waves (or \emph{wave trains}) satisfy
$u_j=u_{j+N}$ for some periodicity length $N<\infty$. Homoclinic waves (\emph{solitons}, \emph{solitary waves}) are localized via
$\lim_{j\to\pm\infty}u_j=0$, and are hence also \emph{breathers}.
Finally, heteroclinic waves (\emph{fronts}, \emph{kinks}) connect different asymptotic states.
\par

The existence of standing wave solutions to \eqref{Eqn:DNLS1} has been investigated by
several authors using rather different methods. Sometimes it is possible to find exact
solutions, see \cite{ELS85,KRSS05}, but in general one needs more sophisticated and robust
arguments. A perturbative approach to the existence problem was developed by MacKay and Aubry
\cite{MA94,Aub97} and relies on continuation method. The main idea is to start with a given
solution in the anti-continuum limit $\alpha=0$ and to show that there is a corresponding solution for small
$\alpha$. Continuation arguments have been proven powerful for both analytical
considerations and numerical simulations and seem to be the preferred method in the physics
community.
\par
The main limitation of any continuation methods, however, is the need of an anchor solution
around which the equation is expanded. As a consequence there is a growing interest in
alternative existence proofs for standing waves. Example are dynamical systems approaches \cite{QX07,PR05},
or variational methods that employ critical point techniques (linking theorems, Nehari
manifolds) to establish the existence of waves with prescribed frequency $\si$, see 
\cite{PZ01,PR08,ZP09,ZL09} and 
\cite{Pan06,Pan07,SZ10} for similar results in DNLS with periodic coefficients.
%
%-------------------------------------------------------------------------------------------
\subsection{Variational setting}
%-------------------------------------------------------------------------------------------
%
In this paper we rely on a variational setting, which does not prescribe the frequency
but the power of a standing wave. More precisely, we obtain homoclinic standing waves as solutions to a 
constrained optimization problem, in which the frequency $\si$ is the Lagrangian multiplier. 
A similar idea was used by Weinstein \cite{Wei99} for DNLS with power nonlinearities, 
but we allow for a wider class of nonlinear potentials
$\Psi$. Moreover, our approach provides more information about the shape
of standing waves as it guarantees the existence of waves with unimodal and even profile
$u$. We also emphasize that, contrary to variational methods with prescribed $\si$, our existence proof gives rise to
an effective approximation scheme for standing waves. Finally, the restriction to
the one-dimensional case is not essential but was made for the sake of simplicity.
\par
In oder to sketch the main idea of our method we introduce an energy functional $\calP$ and
the power functional $\calN$ by
\begin{align}
\label{Eqn:Intro.Energies}
\calP\at{u}=\sum\limits_j\Psi\at{u_j^2}+\alpha\sum\limits_j\at{u_{j+1}+u_{j-1}}u_j
,\qquad\calN\at{u}=\sum\limits_j{u_j^2}.
\end{align}
Both $\calN$ and $\calP$ are related to conserved quantities for the Hamiltonian system
\eqref{Eqn:DNLS0}. In fact, $\calN$ is linked to the gauge invariance by Noether's Theorem, and
rearranging the quadratic terms we find
$\calP\at{u}=2\alpha\calN\at{u}-\calH\at{u}$, where
\begin{align*}
\calH\at{A}=\sum\limits_j{\alpha\babs{A_{j+1}-A_j}^2}-\Psi\nat{\abs{A_j}^2}
\end{align*}
is the Hamiltonian corresponding to \eqref{Eqn:DNLS0}. We readily verify that the standing
wave equation \eqref{Eqn:StandingWaveEquation} is equivalent to
\begin{align}
\label{Eqn:Intro.EULA}
\si\partial\calN\at{u}=\partial\calP\at{u}
\end{align}
where $\partial$ denotes the variational derivative with respect to $u$. The key observation
is that \eqref{Eqn:Intro.EULA} can be considered as the Euler--Lagrange equation of  the
optimization problem
\begin{align}
\label{Eqn:Intro.OP}
\text{maximize $\calP\at{u}$ under the constraint $\calN\at{u}=\varrho$},
\end{align}
where $\si$ plays the role of an Lagrangian multiplier. Notice that \eqref{Eqn:Intro.OP} is equivalent to 
minimizing the energy $\calH$ subject to prescribed power $\calN$, which is a well established idea in the theory of standing waves, see \cite{Wei99} for DNLS and \cite{Pav09,Stu09} for dispersive Hamiltonian PDEs.
\par
In this paper we refine the optimization problem \eqref{Eqn:Intro.OP} by considering only
those profiles $u$ that are non-negative, unimodal, and even. Specifically, we solve the optimization problem
\begin{align}
\label{Eqn:Intro.OPRef}
\text{maximize $\calP\at{u}$ under the constraints $\calN\at{u}=\varrho$ and $u\in\calC$},
\end{align}
where the convex cone $\calC$ consists of all profiles $u$ that satisfy
$u_{-j}=u_{j}$ and $u_{j}\geq{u_{j+1}}\geq0$ for all $j\geq0$.
Of course, we then have to show that
each solution to the so restricted optimization problem satisfies the standing wave equation
\eqref{Eqn:Intro.Energies} without further multipliers.
\par
It is known that there exist two possible choices for the index $j$. In the \emph{on-site}
(or \emph{site-centered}) setting we suppose $j\in{\Zset}$, whereas in the \emph{inter-site}
(or \emph{bond-centered}) setting we choose $j\in\Zset+\tfrac{1}{2}$. Both settings are
equivalent on the level of \eqref{Eqn:DNLS1}, and even on the level of \eqref{Eqn:Intro.OP}
with $u\in\ell^2$, but lead to different results when studying waves with even and unimodal
profile $u\in\calC$. In fact, on-site waves attain their maximum in an odd number of points centered
around $j=0$ (generically only in $j=0$), whereas the maximum of inter-site waves is realized
in an even number of points (generically only in $j=-\tfrac{1}{2}$ and $j=\tfrac{1}{2}$).
%
%-------------------------------------------------------------------------------------------
\subsection{Sketch of the proof and main result}
%-------------------------------------------------------------------------------------------
%
%
Due to the lack of strong compactness, is not trivial to show that $\calP$
attains its maximum on the set of interest. A standard strategy would be to
employ Lion's \emph{concentration compactness principle} \cite{Lio84}, see also \cite{Wei99,Pav09}, but
 we argue differently: At first we
consider the analogue to \eqref{Eqn:Intro.OPRef} in the space of periodic profiles with $u_j=u_{j+N}$.
The existence of a maximiser is then granted and 
the invariance properties of the reversed gradient flow for $\calP$ ensure that the maximizer
solves \eqref{Eqn:Intro.EULA} with some multiplier $\si$. In particular, there are no
multipliers due to the unimodality constraint, and so we obtain periodic waves with unimodal and even profile.
In the second step we then establish the existence of homoclinic waves by
passing to the limit $N\to\infty$. To this end we exploit a \emph{strict maximum condition},
which replaces the concentration compactness principle and guarantees that the periodic waves are uniformly localized. We mention that approximation 
by periodic waves is also used in \cite{Pan06,Pan07}, but in a variational setting that prescribes $\si$ and uses critical techniques to construct saddle points of an action integral.
\bigpar
Our main result on standing waves for DNLS can be summarized as follows.
\begin{theorem}
\label{Intro:Result}%
 Suppose $\alpha>0$ and that $\Psi$ satisfies the super-linear growth and
regularity conditions formulated in Assumption \ref{Intro:Ass}. Then, in both the on-site
and the inter-site setting there exists a one-parameter family of homoclinic standing waves
$\pair{u}{\si}$ that is parametrized by
$\varrho=\norm{u}^2\geq{\varrho_\ast}\at\alpha\geq0$. For each wave, the frequency satisfies
$\si>2\alpha$ and the profile $u$ is non-negative, even, unimodal, and exponentially decaying.
\end{theorem}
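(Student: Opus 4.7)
The plan is to implement the two-step variational strategy outlined in Section~1.2, first solving the problem on an $N$-periodic lattice where compactness is automatic, and then passing to the limit $N\to\infty$ with the aid of a strict maximum condition.

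\emph{Step 1 (Periodic approximation).} Fix a period $N$ and replace $\calC$ by its $N$-periodic analogue $\calC_N$, on which the constraint set $\{u\in\calC_N:\calN\at{u}=\varrho\}$ is finite-dimensional and compact. Continuity of $\calP$, which follows from the regularity hypotheses on $\Psi$ in Assumption~\ref{Intro:Ass}, then yields a maximizer $u^{\at{N}}$. The superlinear growth of $\Psi$ ensures that the maximum value $\calP\at{u^{\at{N}}}$ is bounded below by $\Psi\at{\varrho}$ (concentrate all mass in one site), so the maximizer is non-trivial.

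\emph{Step 2 (Euler--Lagrange equation with no spurious multipliers).} To show that $u^{\at{N}}$ solves \eqref{Eqn:StandingWaveEquation} with some frequency $\si^{\at{N}}$ and no further multipliers coming from the cone constraint, I would construct a (reversed) projected gradient flow $\dot u=\partial\calP\at{u}-\la\at{u}\partial\calN\at{u}$ on the sphere $\{\calN=\varrho\}$ and verify that it preserves $\calC_N$. The invariance would follow from the fact that the discrete Laplacian, together with the pointwise map $u_j\mapsto\Psi^\prime\at{u_j^2}u_j$, preserves the monotonicity relations defining $\calC_N$ (symmetrization / rearrangement type argument). Since $u^{\at{N}}$ is a maximum, it is a stationary point of the flow, which gives \eqref{Eqn:Intro.EULA} with $\si=\la\at{u^{\at{N}}}$; the cone constraints are inactive in the variational sense.

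\emph{Step 3 (Passage to the limit and strict maximum condition).} Uniform localization is the heart of the argument. For $u\in\calC_N$ unimodal and even with $\calN\at{u}=\varrho$, I would establish a \emph{strict maximum condition} of the form $u_0^2\geq c\at{\varrho,\alpha}>0$ for every maximizer, by comparing $\calP\at{u^{\at{N}}}$ with the value on a spread-out test profile and exploiting the superlinearity of $\Psi$. Combined with unimodality, this yields uniform control of the tail $\sum_{\abs{j}\geq{M}}u_j^2$ independent of $N$. A diagonal argument then extracts a subsequence converging pointwise and in $\ell^2$ to some $u^\ast\in\calC$ with $\calN\at{u^\ast}=\varrho$, and the corresponding multipliers $\si^{\at{N}}$ converge to some $\si$, so the limit satisfies \eqref{Eqn:StandingWaveEquation}.

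\emph{Step 4 (Frequency bound and exponential decay).} Evaluating \eqref{Eqn:StandingWaveEquation} at $j=0$ where $u_0\geq u_{\pm1}$ and $\Psi^\prime\at{u_0^2}>0$ gives $\si>2\alpha$. Once this spectral gap is known, the tail equation behaves like a linear second-order recurrence with characteristic roots off the unit circle, forcing exponential decay of $u_j^\ast$ by a standard discrete comparison argument. The one-parameter family structure and the threshold $\varrho_\ast\at\alpha$ follow by running the construction for each admissible $\varrho$ and using a scaling/monotonicity argument in $\varrho$ based on the superlinearity of $\Psi$.

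I expect the main obstacle to be \emph{Step~3}, namely proving the strict maximum condition in a form uniform in $N$. This is precisely the place where strong compactness fails and where the author's approach departs from concentration-compactness; success will hinge on carefully quantifying how much the superlinear term $\Psi$ penalizes the spreading of mass against the linear coupling term in $\calP$, and on exploiting unimodality to translate the pointwise lower bound on $u_0$ into an $\ell^2$ tightness estimate.
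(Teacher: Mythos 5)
Your overall architecture (periodic approximation, a cone-preserving constrained gradient flow to rule out multipliers from the shape constraint, then the limit $N\to\infty$) is exactly the paper's, and Steps~1--2 are sound in outline. The genuine gap is in the mechanism of Steps~3--4. First, a pointwise lower bound $u_0^2\geq c\at{\alpha,\varrho}>0$ combined with unimodality does \emph{not} give uniform $\ell^2$ tail control: unimodality and $\norm{u}^2=\varrho$ only yield $u_j^2\leq\varrho/\at{2\abs{j}+1}$, whose tails are not summable, and nothing prevents a sequence of periodic maximizers from keeping a fixed peak at $j=0$ while spreading an $O(1)$ fraction of the mass over $O(N)$ sites. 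Second, your source for the frequency gap is wrong: evaluating \eqref{Eqn:StandingWaveEquation} at $j=0$, where $u_{\pm1}\leq u_0$, gives $\si\leq 2\alpha+\Psi^\prime\at{u_0^2}$, an \emph{upper} bound, not $\si>2\alpha$; and deducing $\si>2\alpha$ from the exponential decay of the limit profile is circular, since uniform decay is precisely what the frequency gap is needed for.

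The paper's resolution is to place the strict maximum condition at the level of the energy rather than the profile: $T_\infty\pair{\alpha}{\varrho}=\sup\calP_\infty/\at{\alpha\varrho}>2$, where $2$ is the supremum of the pure coupling term on the sphere; this is verified under (A1)--(A3) by explicit test profiles (plateaus and discrete exponentials). Testing the Euler--Lagrange equation with $u_N$ and using superlinearity $x\Psi^\prime\at{x}\geq\Psi\at{x}$ gives $\si_N\varrho\geq\calP_N\at{u_N}$, and together with the convergence $T_N\to T_\infty$ this yields $\liminf_N\si_N>2\alpha$ \emph{before} any decay is known. Only then does the tail of the recurrence, with $\Psi^\prime\at{u\dblidx{N}{j}^2}$ made small for $\abs{j}\geq j_\star$ via the unimodal pointwise bound, give $u\dblidx{N}{j}\leq\kappa\,u\dblidx{N}{j-1}$ with $\kappa=\alpha/\at{\si_\star-\alpha}<1$, hence the uniform exponential estimate that replaces concentration compactness. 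You should reorder your argument accordingly: frequency gap first (from the global energy identity and the energy-level strict maximum condition), tail estimate second (from the recurrence); the threshold $\varrho_\ast\at{\alpha}$ then comes from the monotonicity of $T_\infty$ in $\varrho$, which is indeed proved by the scaling argument you anticipate.
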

We proceed with some remarks concerning the assumptions and assertions of Theorem
\ref{Intro:Result}.
\begin{enumerate}
\item
The class of admissible $\Psi$ includes all convex functions with $\Psi^\prime\at{0}=0$. In particular, it contains power nonlinearities
\begin{align*}
\Psi\at{x}=\frac{1}{\eta}x^{1+\eta},\quad \Psi^\prime\at{x}=x^\eta,\quad \eta>0,
\end{align*}
but also potentials with saturable derivative such as
\begin{align*}
\Psi\at{x}=x-\log\at{1+x}
,\quad
\Psi^\prime\at{x}=\frac{x}{1+x}.
\end{align*}
Notice that the dynamical systems approach by Qin and Xioa \cite{QX07} also allows for arbitrary strictly convex $\Psi$ and guarantees the existence of a homoclinic wave for all $\si\in\oointerval{2\alpha}{2\alpha+h_\infty}$ with $h_\infty=\lim_{x\to\infty}\Psi^\prime\at{x}$.
\item Theorem \ref{Intro:Result} also covers some non-convex potentials as for
instance
\begin{align*}
\Psi\at{x}=\frac{x^3}{{1+x^2}}
,\quad
\Psi^\prime\at{x}=\frac{x^2\at{3 + x^2}}{\at{1 + x^2}^2}.
\end{align*}
\item
Our results derived below imply $\varrho_\ast\at{\alpha}\to0$ as $\alpha\to0$.
Therefore, instead of fixing $\alpha$ and choosing $\varrho$ sufficiently large we can
alternatively fix $\varrho$ and choose $\alpha$ sufficiently small. 
We also derive explicit criteria for $\Psi$, which
guarantee that $\varrho_\ast\at{\alpha}=0$ and reflect
well known results about the \emph{excitation threshold} for breathers in 1D, see
\cite{FKM97,Wei99,DZC08}.
\item
Theorem~\ref{Intro:Result} provides the existence of \emph{bright solitons} for
$\alpha>0$. Via the \emph{staggering transformation} $u_j\rightsquigarrow\at{-1}^j{u_j}$
it also implies the existence of standing waves for the defocusing case $\alpha<0$.
These resulting waves have an alternating phase structure and are called \emph{bright gap
solitons}, see \cite{Kev:DNLS:Ch5}. However, the most fundamental standing waves for
$\alpha<0$ are \emph{dark solitons} which correspond to heteroclinic solutions to
\eqref{Eqn:StandingWaveEquation} and require a different variational setting. This is
discussed in \cite{Her10PreD}.
\end{enumerate}
We do not claim that the constrained maximization of $\calP$ provides all standing wave
solutions to \eqref{Eqn:DNLS1}. It fact, by continuation from the anti-continuum limit it
is known that there exist infinitely many \emph{multipulse solitons}, see
\cite{Kev:DNLS:Ch2}. However, the careful spectral analysis by Pelinovsky et al. \cite{PKF05} indicates that most of these multipulse solitons are unstable. This is true even for unimodal inter-site waves, but at least unimodal on-site waves can be expected
to be stable. We also emphasize that maximisers of the optimization problem
\eqref{Eqn:Intro.OP} can be shown to be orbitally stable \cite{Wei86,Wei99}, for instance using the Grillakis--Shatah--Strauss theory, see \cite{Pav09} and the references therein. Due to the additional shape constraint $u\in\calC$, however, this stability
result does not apply directly to the waves provided by Theorem~\ref{Intro:Result} because it is not clear that the optimization problems \eqref{Eqn:Intro.OP} and \eqref{Eqn:Intro.OPRef} have the same solution. We conjecture that this is indeed the case but a rigorous proof is still missing.
\bigpar
This paper is organized as follows. In \S\ref{sec:Prelim} we formulate our assumptions on
$\Psi$ and introduce some notations. We also introduce the strict maximum condition and
investigate the maximum of $\calP$ on bounded subsets of $\ell^2$. In \S\ref{sec:Waves} we
then proof the existence of periodic waves and pass to the limit $N\to\infty$. Finally, we
present some numerical simulations in \S\ref{sec:ApproxWaves}.
%
%
%
%-------------------------------------------------------------------------------------------
\section{Setting and properties of the energy landscape}\label{sec:Prelim}
%-------------------------------------------------------------------------------------------
%
%
In this paper we always assume $\al>0$ and rely on the following standing assumptions on
$\Psi$.
\begin{assumption}
\label{Intro:Ass}
The potential $\Psi$ is continuously differentiable on $\cointerval{0}{\infty}$ and has the
following properties.
\begin{enumerate}
\item
$\Psi$ is normalized by $\Psi^\prime\at{0}=\Psi\at{0}=0$.
\item
$\Psi$ is grows super-linearly, that means $x\Psi^\prime\at{x}\geq\Psi\at{x}\geq0$ for all
$x\geq0$.
\item
$\Psi$ is non-degenerate, i.e., $\Psi\at{x}>0$ for $x>0$.
\end{enumerate}
\end{assumption}
\begin{remark}
\quad
\begin{enumerate}
\item
Assumption \ref{Intro:Ass} implies
\begin{align}
\label{Eqn:PotProps1}
0\leq{x^{-1}}\Psi\at{x}\leq\Psi^\prime\at{x}\xrightarrow{x\to0}0,\quad
\Psi\at{\la{x}}\geq\la\Psi\at{x}\;\;\;\forall\;x\geq0,\;\la\geq1,\quad
\Psi\at{x}\xrightarrow{x\to\infty}\infty.
\end{align}
\item
Each convex and normalized potential 
$\Psi$ grows super-linearly since the function $\theta\at{x}=x\Psi^\prime\at{x}-\Psi\at{x}$ is
non-negative due to $\theta\at{0}=0$ and $\theta^\prime\at{x}=x\Psi^{\prime\prime}\at{x}\geq0$.
\item Assumption \ref{Intro:Ass} is invariant under scalings
\begin{align*}
\Psi\rightsquigarrow\widehat{\Psi}, \qquad \widehat{\Psi}\at{x}:={a}\Psi\at{bx^{1+c}},
\end{align*}
where $a,b,c>0$ are given constants. This means $\Psi$ satisfies Assumption \ref{Intro:Ass} if and only if
$\widehat{\Psi}$ does so.
\end{enumerate}
\end{remark}
%
%-------------------------------------------------------------------------------------------
\subsection{Notations}
%-------------------------------------------------------------------------------------------
%
%
In what follows we consider real-valued sequences $u=\at{u_j}_{j\in{J}}$, called
\emph{profiles}, where the index set is given by $J=\Zset$ and $J=\Zset+\tfrac{1}{2}$ in the
on-site and inter-site setting, respectively. Under the periodicity condition $u_j=u_{j+N}$
for all $j$ and some $N$, each profile $u$ is uniquely determined by its values on a
periodicity cell $Z_N$. We choose
\begin{align*}
Z_{2M}=\{-M+1,\,\tdots,\,-1,\,0,\,1,\,\tdots,\,M\}
,\quad%
Z_{2M+1}=\{-M,\,\tdots,\,-1,\,0,\,1,\,\tdots,\,M\},
\end{align*}
in the on-site setting, and
\begin{align*}
Z_{2M}
=%
\{-M+\tfrac{1}{2},\,\tdots,\,-\tfrac{1}{2},\,\tfrac{1}{2},\,\tdots,\,M-\tfrac{1}{2}\}
,\quad%
Z_{2M+1}
=%
\{-M+\tfrac{1}{2},\,\tdots,\,-\tfrac{1}{2},\,\tfrac{1}{2},\,\tdots,\,M+\tfrac{1}{2}\}
\end{align*}
in the inter-site setting. The symmetrized periodicity cell is abbreviated by
$\tilde{Z}_N={Z_N}\cap\at{-{Z_N}}$ and we readily verify, see Figure \ref{Fig:WaveTypes}, that
\begin{align}
\label{Eqn:Misc1}
\at{N-1}/2\leq\max{Z_N}\leq{N/2},\qquad\#\big\{i\in{Z_N}\;:\;\abs{i}
\leq%
\abs{j}\,\big\}
=%
2\abs{j}+1
\quad\forall\;{j\in{\tilde{Z}_N}}.
\end{align}
\begin{figure}[ht!]
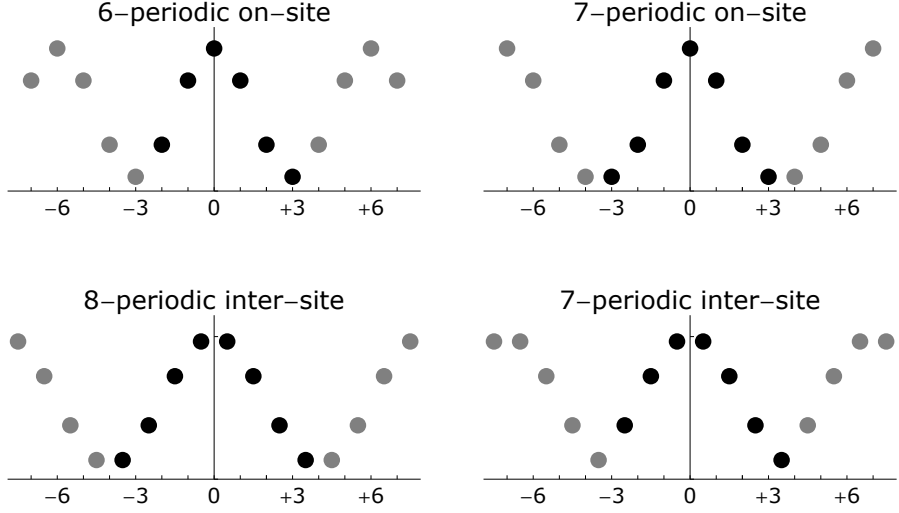
%
\centering{%
\includegraphics[width=0.33\textwidth, draft=\figdraft]%
{\figfile{wt_on_even}}%
\hspace{0.05\textwidth}%
\includegraphics[width=0.33\textwidth, draft=\figdraft]%
{\figfile{wt_on_odd}}%
\vspace{0.05\textwidth}\\%
\includegraphics[width=0.33\textwidth, draft=\figdraft]%
{\figfile{wt_inter_even}}%
\hspace{0.05\textwidth}%
\includegraphics[width=0.33\textwidth, draft=\figdraft]%
{\figfile{wt_inter_odd}}%
\caption{%
Examples for periodic profiles that are non-negative, even, and unimodal.
The values in the periodicity cell are plotted in Black.
}%
\label{Fig:WaveTypes}%
}%
\end{figure}%
For our purposes it is convenient to identify the non-periodic case with $N=\infty$, so we
write $Z_\infty=\tilde{Z}_\infty=J$. In view of \eqref{Eqn:Intro.Energies} and
\eqref{Eqn:Intro.EULA} we introduce the discrete function spaces
\begin{align*}%
\ell^2_N&=\Big\{u=\at{u_j}_{j\in{J}}\;:\;u_{j}=u_{j+N}
\;\text{for all $j\in{J}$},\;\sum_{j\in{Z_N}}u_j^2<\infty\Big\}
\cong\Rset^{N},
\\%
\ell^2_\infty&=
\Big\{u=\at{u_j}_{j\in{J}}\;:\;\sum_{j\in{Z_\infty}}u_j^2<\infty\Big\}
\cong\ell^2\at{\Zset},
\end{align*}
for $N<\infty$ and $N=\infty$, respectively. For both finite and infinite $N$ we denote by
\begin{align*}%
\norm{u}^2=\sum_{j\in{Z_N}}u_j^2
,\qquad
\skp{u}{v}=\sum_{j\in{Z_N}}u_jv_j,
\end{align*}
the norm and scalar product in $\ell^2_N$, so the sphere of radius $\sqrt{\varrho}$ reads
\begin{align*}
\calS_\dblprm{N}{\varrho}=\{u\in\ell^2_N\;:\;\norm{u}^2=\varrho\}.
\end{align*}
We also define the energy functional $\calP_N$ by
\begin{align*}
\calP_N\at{u}=\alpha\calL_N\at{u}+\calW_N\at{u}
,\qquad
\calL_N\at{u}=\sum_{j\in{Z_N}}{u_j}\at{u_{j+1}+u_{j-1}},\qquad
\calW_N\at{u}=\sum_{j\in{Z_N}}\Psi\nat{u_j^2},
\end{align*}
and denote by
\begin{align*}
\calC_N=
\left\{u\;:\;u_j=u_{-j}\geq0\;\;\;\forall\;\;j\in{Z_N}\right\}
\,\cap\,%
\left\{u\;:\;u_{j-1}\geq{u_{j}}\;\;\;\forall\;\;1\leq{}j\in{Z_N}\right\}
\end{align*}
the set of all profiles $u$ that are non-negative, even and unimodal on the periodicity cell
$Z_N$.
\par
From these definitions we readily draw the following conclusions.
\begin{lemma}
\label{Lem:BasicProps}
For both finite and infinite $N$ the following assertions are satisfied.
\begin{enumerate}
\item
$\calP_N$ is G\^{a}teaux-differentiable on $\ell^2_N$ with derivative
$\partial\calP\at{u}_j=2\alpha\at{u_{j+1}+u_{j-1}}+2\Psi^\prime\nat{u_j^2}u_j$.
\item
$\calC_N$ is a positive convex cone and closed under weak
convergence in $\ell^2_N$.
\item
By \eqref{Eqn:Misc1} we have
\begin{math}
u_j\leq\norm{u}/\sqrt{2\abs{j}+1} %
\end{math}
for all $u\in\calC_N$ and $j\in\tilde{Z}_N$.
\end{enumerate}
\end{lemma}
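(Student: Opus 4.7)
The plan is to handle the three assertions separately, with the only real technicality arising in the $N=\infty$ case of part~(i). For part~(i) I would split $\calP_N=\alpha\calL_N+\calW_N$ and differentiate each summand. The bilinear form $\calL_N$ is a polynomial of degree two in $u$, so expanding $\calL_N\at{u+tv}$ in $t$ and relabelling indices (using $\sum_j u_jv_{j\pm1}=\sum_jv_ju_{j\mp1}$, justified for both finite and infinite $N$ by absolute summability in $\ell^2$) gives the directional derivative $2\sum_j v_j\at{u_{j+1}+u_{j-1}}$. For $\calW_N$ the pointwise directional derivative at index $j$ equals $2\Psi^\prime\nat{u_j^2}u_jv_j$, and it remains to exchange the limit $t\to 0$ with the summation. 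This I would justify by dominated convergence: for $\abs{t}\leq 1$ the difference quotient is bounded by $\sup_{\abs{s}\leq 1}\babs{2\Psi^\prime\nat{(u_j+sv_j)^2}(u_j+sv_j)v_j}$, and since $u,v\in\ell^2$ forces $u_j,v_j\to 0$, and since the map $x\mapsto 2\Psi^\prime\at{x^2}x$ is continuous and vanishes at the origin by Assumption~\ref{Intro:Ass}, the supremum is bounded by a constant times $\abs{v_j}$ on the tail, producing an $\ell^1$ majorant together with a uniform bound on the finitely many remaining terms.

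Part~(ii) amounts to an elementary verification. The cone $\calC_N$ is defined by the linear equalities $u_j-u_{-j}=0$ and the linear inequalities $u_{j-1}-u_j\geq 0$ and $u_j\geq 0$; closure under scaling by non-negative reals and under convex combinations is immediate, so $\calC_N$ is a positive convex cone. Each defining (in)equality is continuous in the weak $\ell^2_N$ topology because it involves only finitely many coordinates, and coordinate evaluation is weakly continuous; equivalently, $\calC_N$ is a strongly closed convex set and hence weakly closed by Mazur's theorem. Both arguments apply uniformly in $N$.

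Part~(iii) follows directly from the symmetry and monotonicity built into $\calC_N$ together with the counting identity in \eqref{Eqn:Misc1}. For $u\in\calC_N$ and $j\in\tilde{Z}_N$, unimodality combined with $u_{-i}=u_i$ yields $u_i\geq u_j$ for every $i\in Z_N$ with $\abs{i}\leq\abs{j}$; there are exactly $2\abs{j}+1$ such indices, hence
\begin{equation*}
\norm{u}^2=\sum_{i\in{Z_N}}u_i^2\;\geq\;\sum_{\abs{i}\leq\abs{j}}u_i^2\;\geq\;\at{2\abs{j}+1}u_j^2,
\end{equation*}
which rearranges to the claim. I do not anticipate any real obstacle; the only point requiring genuine care is the $\ell^1$-majorant argument in the infinite-dimensional derivative computation, which is ultimately powered by the continuity of $\Psi^\prime$ and the normalization $\Psi^\prime\at{0}=0$ from Assumption~\ref{Intro:Ass}.
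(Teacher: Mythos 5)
The paper offers no proof of this lemma at all --- it is introduced with ``from these definitions we readily draw the following conclusions'' --- so your write-up is simply the routine verification the author left to the reader, and parts (ii) and (iii) are carried out exactly as intended (weak continuity of coordinate functionals, respectively the counting identity \eqref{Eqn:Misc1} combined with $u_i\geq u_j$ for all $\abs{i}\leq\abs{j}$). The one place where your argument, as literally written, does not close is the domination step for $\calW_\infty$: you bound the difference quotient at index $j$ by ``a constant times $\abs{v_j}$'' and call this an $\ell^1$ majorant, but $C\abs{v_j}$ with $v\in\ell^2$ is in general only square-summable, not summable. The fix is immediate and already implicit in your mean-value estimate: keep the factor $\abs{u_j+sv_j}\leq\abs{u_j}+\abs{v_j}$, so the majorant is $2M\at{\abs{u_j}+\abs{v_j}}\abs{v_j}$ with $M=\sup_{0\leq x\leq(\norm{u}+\norm{v})^2}\Psi^\prime\at{x}<\infty$ by continuity of $\Psi^\prime$, and this is summable by the Cauchy--Schwarz inequality; the same bound shows that the candidate derivative $\bat{2\Psi^\prime\nat{u_j^2}u_j}_j$ lies in $\ell^2$ and hence defines a bounded linear functional, a point worth stating explicitly. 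With that one-line repair the proof is complete and correct for both finite and infinite $N$.
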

The core of our variational existence proof for standing waves is the optimization problem
\begin{align}
\label{Intro:OptProb}
\text{maximize $\calP_N$ on the set $\calC_N\cap\calS_\dblprm{N}{\varrho}$}.
\end{align}
For finite $N$, the existence of maximizers follows from simple compactness arguments and the
invariance properties of the gradient flow of $\calP_N$ imply that each maximizer is in fact a periodic standing wave. For infinite $N$, however, we lack compactness and construct homoclinic
maximizers of $\calP_\infty$ as limits of periodic maximizers of $\calP_N$.
%
%
%
%-------------------------------------------------------------------------------------------
\subsection{Energy maxima on bounded sets}
%-------------------------------------------------------------------------------------------
%
In this section we investigate the function 
\begin{align}
\notag%\label{Eqn:DefMaxFunction}
T_N\pair{\alpha}{\varrho}=\frac{1}{\alpha\varrho}
\sup\left\{\calP_N\at{u}\;:\:u\in\calC_N\cap\calS_\dblprm{N}{\varrho}\right\},
\end{align}
and show that the \emph{strict maximum condition}
\begin{align}
\label{Cond:TSC}
T_\infty\pair{\alpha}{\varrho}>2
\end{align}
holds if one of the following conditions is satisfied:
\begin{enumerate}
\item[$\at{A1}$]
$\alpha$ is sufficiently small,
\item[$\at{A2}$]
$\varrho$ is sufficiently large,
\item[$\at{A3}$]
$\Psi\at{x}\sim{cx}^{1+\eta}$ for all $0\leq{x}\ll1$ and $0<\eta<2$ and some $c>0$.
\end{enumerate}
The strict maximum condition  \eqref{Cond:TSC} appears naturally in our existence proof for homoclinic waves,
see \S\ref{sec:HomoclinicWaves}, and guarantees that the influence of the nonlinearity $\Psi$
is strong enough in the limit $N\to\infty$. Specifically, if \eqref{Cond:TSC} is satisfied,
then the optimization problem \eqref{Intro:OptProb} has a solution for $N=\infty$, and each minimizer is a homoclinic wave with 
unimodal profile. Moreover, if \eqref{Cond:TSC} is violated, then we have
$T_\infty\pair{\alpha}{\varrho}=2=\sup_{u\in\calS_\dblprm{\infty}{1}}\calL_\infty\at{u}$, and this implies
that \eqref{Intro:OptProb} has no solution for $N=\infty$.
\par

We also emphasize that \eqref{Cond:TSC} is equivalent to
\begin{align*}
\inf\Big\{\calH\at{u}\:;\:u\in\ell^2\at{\Zset},\;\calN\at{u}=\varrho\Big\}<0,
\end{align*}
which is precisely the condition used in \cite{Wei99} to prove the existence of homoclinic standing waves for DNLS with power nonlinearity $\Psi\at{x}=cx^{1+\eta}$. Moreover, condition $\at{A3}$ implies that 
there is no excitation threshold for power nonlinearities with $0<\eta<2$,
which means there exist waves with arbitrary small $\varrho$. This  result is again in line with the findings from \cite{Wei99}.
\bigpar
We now summarize some elementary properties of the function $T_N$.
\begin{lemma}
\label{Lem:TProps1}%
$T_N$ is decreasing in $\alpha$ and increasing in $\varrho$ for both finite and infinite $N$.
\end{lemma}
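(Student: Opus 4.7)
The plan is that both monotonicity assertions follow directly from the structure $\calP_N=\alpha\calL_N+\calW_N$ on the cone $\calC_N$, together with the non-negativity $\Psi\geq0$ and the super-linearity inequality $\Psi\at{\mu{x}}\geq\mu\Psi\at{x}$ for $\mu\geq1$ recorded in \eqref{Eqn:PotProps1}. The argument will be uniform in $N\in\Nset\cup\{\infty\}$ and requires no compactness of the admissible set, so the cases of finite and infinite periodicity length are handled at once and it is even immaterial whether the supremum defining $T_N$ is finite.

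For monotonicity in $\alpha$, I would pull the coefficient out of the supremum and write
$$T_N\pair{\alpha}{\varrho}=\frac{1}{\varrho}\sup_{u\in\calC_N\cap\calS_\dblprm{N}{\varrho}}\Bat{\calL_N\at{u}+\tfrac{1}{\alpha}\calW_N\at{u}}.$$
Since every $u\in\calC_N$ is non-negative, Assumption~\ref{Intro:Ass} yields $\calW_N\at{u}=\sum_j\Psi\at{u_j^2}\geq0$. For each fixed admissible $u$ the map $\alpha\mapsto\calL_N\at{u}+\calW_N\at{u}/\alpha$ is therefore non-increasing on $\oointerval{0}{\infty}$, and since the supremum of a family of non-increasing functions is again non-increasing, the claim for $\alpha$ follows.

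For monotonicity in $\varrho$, I would fix $0<\varrho_1<\varrho_2$, put $\lambda=\sqrt{\varrho_2/\varrho_1}\geq1$, and use a scaling argument. Given any $u\in\calC_N\cap\calS_\dblprm{N}{\varrho_1}$, the rescaled profile $v=\lambda{u}$ lies in $\calC_N\cap\calS_\dblprm{N}{\varrho_2}$ because $\calC_N$ is a positive convex cone by Lemma~\ref{Lem:BasicProps}. The quadratic part scales exactly as $\calL_N\at{v}=\lambda^2\calL_N\at{u}$, whereas the nonlinear part obeys $\calW_N\at{v}\geq\lambda^2\calW_N\at{u}$ by applying $\Psi\at{\lambda^2 u_j^2}\geq\lambda^2\Psi\at{u_j^2}$ term by term. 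Consequently $\calP_N\at{v}\geq\lambda^2\calP_N\at{u}$, and dividing by $\alpha\varrho_2=\lambda^2\alpha\varrho_1$ yields $\calP_N\at{v}/\at{\alpha\varrho_2}\geq\calP_N\at{u}/\at{\alpha\varrho_1}$; passing to the supremum gives $T_N\pair{\alpha}{\varrho_2}\geq T_N\pair{\alpha}{\varrho_1}$. No serious obstacle arises, and the argument is essentially a bookkeeping exercise; the only conceptual step is to recognise that the sphere constraint can be traded for a dilation exploiting the exact quadratic scaling of $\calL_N$ and the super-linearity of $\Psi$.
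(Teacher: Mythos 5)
Your proof is correct and follows essentially the same route as the paper: the $\alpha$-monotonicity from $\calW_N\geq0$, and the $\varrho$-monotonicity from the dilation $u\mapsto\lambda u$ together with $\calL_N\at{\lambda u}=\lambda^2\calL_N\at{u}$ and $\calW_N\at{\lambda u}\geq\lambda^2\calW_N\at{u}$. The only cosmetic difference is that you invoke the ready-made inequality $\Psi\at{\lambda x}\geq\lambda\Psi\at{x}$ from \eqref{Eqn:PotProps1}, whereas the paper rederives $\calW_N\at{\lambda u}\geq\lambda^2\calW_N\at{u}$ via the differential inequality $\tfrac{\dint}{\dint\lambda}\calW_N\at{\lambda u}\geq\tfrac{2}{\lambda}\calW_N\at{\lambda u}$; both rest on the same super-linearity assumption.
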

\begin{proof}

The monotonicity with respect to $\alpha$ is obvious. Towards the monotonicity in $\varrho$
let $u\in\ell^2_N$ and $\la\geq1$ be fixed. We have $\calL_N\at{\la{u}}=\la^2\calL_N\at{u}$,
and
\begin{align*}
\tfrac{\dint}{\dint\la}\calW_N\at{\la{u}}=
\sum\limits_{j\in{Z}_N}2\Psi^\prime\at{\la^2u_j^2}\la{u_j^2}\geq
\sum\limits_{j\in{Z}_N}\frac{2}{\la}\Psi\at{\la^2u_j^2}=
\frac{2}{\la}\calW_N\at{\la{u}}
\end{align*}
implies $\calW_N\at{\la{u}}\geq\la^2\calW_N\at{u}$. Consequently, for $\norm{u}\neq0$ and
$\la>0$ we find
\begin{align*}
T_N\pair{\alpha}{\la^2\norm{u}^2}
\geq%
\frac{1}{\la^2\norm{u}^2}\at{\calL_N\at{\la{u}}+
\tfrac{1}{\alpha}\calW_N\at{\la{u}}}
\geq%
\frac{1}{\norm{u}^2}\at{\calL_N\at{u}+
\tfrac{1}{\alpha}\calW_N\at{u}},
\end{align*}
and taking the supremum over $u$ gives
$T_N\pair{\alpha}{\la^2\norm{u}^2}\geq{}T_N\pair{\alpha}{\norm{u}^2}$.
\end{proof}
\begin{remark}
For power nonlinearities $\Psi\at{x}=cx^{1+\eta}$ with $c,\,\eta>0$ we have
$T_N\pair{\alpha}{\la^2\varrho}=\T_N\pair{\la^{-2\eta}\al}{\varrho}$ and hence
\begin{align*}
T_N\pair{\alpha}{\varrho}=T_N\bpair{\al\varrho^{-\eta}}{1}=
T_N\bpair{1}{\alpha^{-1/\eta}\varrho}
\end{align*}
for all
$\al,\,\varrho,\,\la>0$ and both $N<\infty$ and $N=\infty$.
\end{remark}
For finite $N$, the constant profile with $u_{j}=\sqrt{\varrho/N}$ for all $j$ belongs to
$\calC_N\cap\calS_\dblprm{N}{\varrho}$, and using \eqref{Eqn:PotProps1} we readily
verify that
\begin{align}
\label{Eqn:TProps4}%
\calP_N\bat{\sqrt{\varrho/N}}
=%
2\alpha\varrho+\varrho\frac{N}{\varrho}\Psi\at{\varrho/N}
\xrightarrow{N\to\infty}%
2\alpha\varrho.
\end{align}
In particular, due to $\Psi\at{\varrho/N}>0$ we have
\begin{align}
\label{Eqn:TProps3}%
T_N\pair{\alpha}{\varrho}\geq\calP_N\bat{\sqrt{\varrho/N}}>2
\end{align}
for all $N<\infty$, $\al>0$, $\varrho>0$. The case $N=\infty$ is a bit more delicate.
\begin{lemma}
\label{Lem:TProps2a}%
We have $T_\infty\pair{\alpha}{\varrho}\geq2$ with strict inequality provided that $\at{A1}$
or $\at{A2}$ is satisfied.
\end{lemma}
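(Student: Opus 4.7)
My plan is to establish both claims by exhibiting explicit admissible test profiles in $\calC_\infty\cap\calS_\dblprm{\infty}{\varrho}$ and reading off lower bounds for $T_\infty$ from the values of $\calP_\infty$ on them. The natural family to try is the \emph{rectangular profile} $u\upidx{M}$ that takes the constant value $\sqrt{\varrho/M}$ on the $M$ sites of $J$ nearest to the origin (with $M$ odd in the on-site and $M$ even in the inter-site setting) and vanishes elsewhere. Each such profile clearly belongs to $\calC_\infty\cap\calS_\dblprm{\infty}{\varrho}$, and a straightforward bookkeeping of the nonzero terms in $\calL_\infty$, including the contributions at the two boundary sites of the support, yields the uniform identity
\begin{equation*}
\frac{\calP_\infty\bat{u\upidx{M}}}{\alpha\varrho}
=
\Bat{2-\tfrac{2}{M}}+\frac{1}{\alpha}\cdot\frac{\Psi\at{\varrho/M}}{\varrho/M}.
\end{equation*}
The weak inequality $T_\infty\pair{\alpha}{\varrho}\geq 2$ is then obtained by letting $M\to\infty$ with $\varrho$ fixed: the correction $2/M$ vanishes and the potential contribution vanishes as well, since $\Psi\at{t}/t\to0$ as $t\to0$ by \eqref{Eqn:PotProps1}.

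For the strict inequality I would treat the two cases separately. Under $\at{A1}$ I would evaluate the displayed identity at the minimal admissible value of $M$, namely $M=1$ in the on-site setting with $u_0=\sqrt{\varrho}$ and $M=2$ in the inter-site setting with $u_{\pm\ui}=\sqrt{\varrho/2}$. The potential contribution is $\Psi\at{\varrho}/\at{\alpha\varrho}$ and $2\Psi\at{\varrho/2}/\at{\alpha\varrho}$ respectively, both strictly positive by non-degeneracy of $\Psi$, and both blow up as $\alpha\to0$; therefore the ratio exceeds $2$ as soon as $\alpha$ is sufficiently small. Under $\at{A2}$ I would instead keep the density $t=\varrho/M$ bounded away from zero. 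A direct differentiation shows that Assumption \ref{Intro:Ass}(2) forces $t\mapsto\Psi\at{t}/t$ to be non-decreasing, so fixing any $t_0>0$ and setting $\kappa\deq\Psi\at{t_0}/t_0>0$, I would first pick $M$ so large that $\kappa/\alpha>2/M$ and then restrict to $\varrho\geq M t_0$; this forces $\Psi\at{\varrho/M}/\at{\varrho/M}\geq\kappa$ and hence $\calP_\infty\bat{u\upidx{M}}/\at{\alpha\varrho}>2$ via the displayed identity.

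The whole argument is elementary and there is no genuine analytic obstacle. The only non-routine observation is the monotonicity of $t\mapsto\Psi\at{t}/t$, which is a consequence of super-linearity $t\Psi^\prime\at{t}\geq\Psi\at{t}$; beyond that, what is needed is careful bookkeeping to verify that the rectangular profiles really belong to $\calC_\infty$ in both the on-site and the inter-site settings, with the correct parity of $M$ and the correct boundary contributions to $\calL_\infty$, and to track the constants entering the strict inequality.
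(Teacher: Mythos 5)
Your proposal is correct and follows essentially the same route as the paper, which uses exactly these rectangular test profiles, computes $\calL_\infty$ and $\calW_\infty$ explicitly, sends the support length to infinity for the weak bound $T_\infty\pair{\alpha}{\varrho}\geq2$, and then exploits the strict positivity of the $\Psi$--term for the strict inequality. The only immaterial difference is in case $\at{A2}$: the paper lets the support length grow with $\varrho$ so that the density $\varrho/\at{2m+1}$ stays of order one, whereas you fix $M$ and invoke the monotonicity of $t\mapsto\Psi\at{t}/t$; both yield the same conclusion.
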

\begin{proof}
We start with the proof in the on-site setting and define
$u_m\in\calC_\infty\cap\calS_\dblprm{\infty}{\varrho}$ by
\begin{align*}
u\dblidx{m}{j}
=%
\frac{\sqrt{\varrho}}{\sqrt{2m+1}}
\begin{cases}
1&\text{for $\abs{j}\leq{m}$},\\
0&\text{for $\abs{j}>{m}$}.
\end{cases}
\end{align*}
A direct calculation shows
\begin{align*}
\calL_\infty\at{u_m}=\frac{2m}{2m+1}\varrho
,\quad%
\calP_\infty\at{u_m}=\at{2m+1}\Psi\at{\frac{\varrho}{2m+1}},
\end{align*}
and hence
\begin{align}
\label{Lem:TProps2.Eqn1}
\begin{split}
T_\infty\pair{\alpha}{\varrho}
&\geq%
\frac{\al\calL_\infty\at{u_m}+\calP_\infty\at{u_m}}{\alpha\varrho}
\\&=%
2-\frac{1}{\varrho}\frac{\varrho}{2m+1}+
\frac{1}{\alpha}\frac{2m+1}{\varrho}\Psi\at{\frac{\varrho}{2m+1}}
\geq2-\frac{1}{\varrho}\frac{\varrho}{2m+1}.
\end{split}
\end{align}
This gives $T_\infty\pair{\alpha}{\varrho}\geq2$ by passing to the limit $m\to\infty$ and
using \eqref{Eqn:PotProps1}.
\par
We have now two possibilities to show $T_\infty\pair{\alpha}{\varrho}>2$. First, for
given $\varrho$ and $m$ we can choose $\alpha$ small. Second, for fixed $\alpha$ 
and any $\varrho>1$ we choose $m=m\at{\varrho}$ such that $2m-1<\varrho<2m+1$, so
\eqref{Lem:TProps2.Eqn1} gives
\begin{align*}
T_\infty\pair{\alpha}{\varrho}\geq2-\frac{1}{\varrho}+\frac{c}{\al}
\end{align*}
for some $c>0$, and the claim follows with $\varrho\to\infty$. 
\par
Finally, in the inter-site setting we define $u_m$ by
\begin{align*}
v\dblidx{m}{j}
=%
\frac{\sqrt{\varrho}}{\sqrt{2m}}
\begin{cases}
1&\text{for $\abs{j}\leq{m}-\tfrac{1}{2}$},\\
0&\text{for $\abs{j}>{m}-\tfrac{1}{2}$},\\
\end{cases}
\end{align*}
and argue analogously.
\end{proof}
We now show that the strict maximum conditions is always satisfied if $\Psi$ is sufficiently nonlinear for $x\approx{0}$.
The key idea in our proof is also used in \cite{Wei99} to 
disprove the existence of an excitation thresholds for power nonlinearities with $0<\eta<2$.
\begin{lemma}
\label{Lem:TProps2b}%
$\at{A3}$ implies $T_\infty\pair{\alpha}{\varrho}>2$.
\end{lemma}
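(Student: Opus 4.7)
The approach will follow the strategy hinted at in the remark before the lemma: build a rescaled test profile whose nearest-neighbor part saturates to $2\varrho$ only slowly (like $\lambda^2$) while the nonlinear contribution $\calW_\infty$ is forced to be much larger (like $\lambda^\eta$) by the sub-quadratic growth of $\Psi$ at the origin under $(A3)$. Since $0<\eta<2$, $\lambda^\eta \gg \lambda^2$ as $\lambda\to 0^+$, and the surplus yields strict inequality in the bound for $T_\infty$.

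More concretely, I fix a smooth, non-negative, even, compactly supported unimodal function $f:\Rset\to\Rset$, normalised by $\int_{\Rset} f^2\dint{x}=1$, and in the on-site setting I define
\begin{align*}
u_j^{(\lambda)} \deq \kappa_\lambda\,\sqrt{\lambda}\,f(\lambda j),\qquad j\in\Zset,
\end{align*}
with $\kappa_\lambda>0$ chosen so that $u^{(\lambda)}\in\calS\dblidx{\infty}{\varrho}$. Because $\sum_j\lambda f(\lambda j)^2$ is a Riemann sum for $\int f^2 = 1$, we have $\kappa_\lambda\to\sqrt{\varrho}$ as $\lambda\to 0^+$. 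By the chosen symmetry and monotonicity of $f$, $u^{(\lambda)}\in\calC_\infty$. In the inter-site setting I use the same ansatz with $j\in\Zset+\tfrac12$, and the same estimates apply verbatim.

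Then I carry out the two scaling estimates. First, using the telescoped identity $\calL_\infty(u)=2\|u\|^2-\sum_j(u_{j+1}-u_j)^2$ together with Taylor expansion of $f$, I show
\begin{align*}
\calL_\infty\at{u^{(\lambda)}}\,\geq\,2\varrho-C_1\lambda^2,
\qquad C_1\deq \varrho\!\int_{\Rset}\at{f^\prime}^2\dint{x}+o(1),
\end{align*}
where the $o(1)$ accounts for the Riemann-sum remainder and the fact that $\kappa_\lambda^2\to\varrho$. Second, invoking $(A3)$: since $\sup_j\abs{u_j^{(\lambda)}}^2\leq \kappa_\lambda^2\lambda\norm{f}_\infty^2\to 0$, for any $\eps>0$ and all sufficiently small $\lambda$ one has $\Psi(x)\geq(c-\eps)x^{1+\eta}$ on the range of the $u_j^{(\lambda){}2}$, hence
\begin{align*}
\calW_\infty\at{u^{(\lambda)}}\,\geq\,(c-\eps)\,\kappa_\lambda^{2(1+\eta)}\lambda^{1+\eta}\!\sum_{j\in\Zset}f(\lambda j)^{2(1+\eta)}\,=\,C_2\lambda^\eta+o(\lambda^\eta),
\end{align*}
with $C_2=(c-\eps)\varrho^{1+\eta}\!\int f^{2(1+\eta)}\dint{x}>0$, again via a Riemann-sum approximation. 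Combining the two yields
\begin{align*}
\frac{\calP_\infty\at{u^{(\lambda)}}}{\alpha\varrho}\,\geq\,2-\frac{C_1}{\varrho}\lambda^2+\frac{C_2}{\alpha\varrho}\lambda^\eta+o(\lambda^\eta).
\end{align*}
Because $\eta<2$, the $\lambda^\eta$ term dominates for small $\lambda$, so $\calP_\infty(u^{(\lambda)})/(\alpha\varrho)>2$ for $\lambda$ small, proving $T_\infty\pair{\alpha}{\varrho}>2$.

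The main obstacle I anticipate is purely technical: justifying the two Riemann-sum limits uniformly in $\lambda$ and ensuring the $(A3)$ lower bound on $\Psi$ can be applied uniformly on the range of $\at{u_j^{(\lambda)}}^2$. Smoothness and compact support of $f$ keep both issues elementary; the scaling exponents, which are the actual content of the lemma, then do the rest. The contrast with Lemma~\ref{Lem:TProps2a} is worth noting: there only step profiles are used, which suffice for $(A2)$ (large $\varrho$) but, as a simple computation shows, cannot produce $T_\infty>2$ for all $\varrho$ when $\eta\geq 1$. Spreading the profile over a scale $1/\lambda$ rather than a finite one is precisely what extracts the extra nonlinear gain in $(A3)$.
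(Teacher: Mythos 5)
Your proof is correct and is essentially the paper's argument: the paper uses the explicit exponential family $u_{\zeta,j}=\sqrt{\varrho\tanh\zeta}\,\mhexp{-\zeta\abs{j}}$, for which $\calL_\infty = \varrho\at{2-\zeta^2+\nDO{\zeta^4}}$ and $\calW_\infty\sim \tfrac{c}{1+\eta}\varrho^{1+\eta}\zeta^\eta$ can be computed in closed form via geometric sums, and then exploits the same competition $\zeta^\eta$ versus $\zeta^2$ with $\eta<2$. Your version with a general rescaled bump and Riemann-sum limits is a mild variant of the same idea and goes through.
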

\begin{proof}
We present the arguments for the on-site setting; the proof for 
the inter-site setting is similar. For fixed $\alpha>0$, $\varrho>0$ but arbitrary $0<\zeta\ll1$ we consider $u_\zeta\in\calC_\infty\cap\calS_\dblprm{\infty}{\varrho}$ defined by
\begin{align*}
u_{\zeta,\,j}=\sqrt{\varrho\tanh{\zeta}}\exp\at{-\zeta\abs{j}},
\end{align*}
which has small amplitudes due to $\tanh{\zeta}=\zeta+\nDO{\zeta^3}$.
By direct computations we find
\begin{align*}
\calL_\infty\at{u_\zeta}=\frac{2\varrho}{\cosh{\zeta}}=\varrho\at{2-\zeta^2+\nDO{\zeta^4}},
\end{align*}
as well as
\begin{align*}
\calW_\infty\at{u_\zeta}
&=%
\bat{1+\Do{1}}\sum_{j\in{J}_\infty}c{u_j}^{2+2\eta}=
\bat{1+\Do{1}}\,c\,\at{\varrho\tanh\zeta}^{1+\eta}\frac{1+\exp\bat{-2\zeta\at{1+\eta}}}{1-\exp\bat{-2\zeta\at{1+\eta}}}
\\&=\bat{1+\Do{1}}\frac{c}{1+\eta}\varrho^{1+\eta}\zeta^{\eta},%
\end{align*}
where $\Do{1}$ means arbitrary small for small $\zeta$. We now conclude that 
\begin{align*}
T_\infty\pair{\alpha}{\rho}\geq\calP_\infty\at{u_\zeta}=2+\bat{1+\Do{1}}\at{\frac{c\varrho^\eta}{\alpha\at{1+\eta}}\zeta^\eta-\zeta^2},
\end{align*}
and the claim follows from choosing $\zeta$ sufficiently small.
\end{proof}
We finally establish the convergence
$T_N\pair{\alpha}{\varrho}{}\to{}T_\infty\pair{\alpha}{\varrho}$ as $N\to\infty$. To this end
we denote elements of $\ell^2_N$ by $u_N=\at{u_\dblprm{N}{j}}_j$ and introduce two operators
$R_N$ and $E_N$ which act on profiles $u$ as follows. $R_N{u}$ is the restriction
of $u$ to the symmetrized periodicity cell $Z_N$, i.e.,
\begin{align}
\label{Eqn:DefR}
\at{R_Nu}_j=
\begin{cases}
u_j&\text{for $j\in\tilde{Z}_N$},\\
0&\text{otherwise},
\end{cases}
\end{align}
and $E_N{u}$ is defined as the periodic continuation of $R_N{u}$. Notice that we allow for
small embedding errors. In fact, for non-symmetric periodicity cells with
$\tilde{Z}_N\neq{Z_N}$ we have ${E}_Nu_N\neq{u_N}$ for $u_N\in\ell^2_N$, but the embedding error is small due to the decay estimate from Lemma \ref{Lem:BasicProps}.
\par
We readily verify that $E_N$ and $R_N$ are linear operators with
\begin{align*}
R_N:\calC_N\to\calC_\infty,\quad{}E_N:\calC_\infty\to\calC_N.
\end{align*}
and $\norm{R_N{u}_N}\leq{\norm{u_N}}$ and $\norm{E_N{u}_\infty}\leq{\norm{u_\infty}}$.
\begin{lemma}
\label{Lem:REProps} We have
\begin{align*}
\abs{\calP_N\at{u_{N}}-\calP_\infty\at{R_Nu_{N}}}\leq\nDO{\norm{u_N}/\sqrt{N}}
,\quad%
\calP_N\at{E_Nu_\infty}\xrightarrow{N\to\infty}\calP_\infty\at{u_\infty}
\end{align*}
for all $u_{N}\in\calC_{N}$ and $u_\infty\in\calC_\infty$.
\end{lemma}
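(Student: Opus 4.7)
The strategy is to split $\calP_N=\alpha\calL_N+\calW_N$ and, for each piece, isolate the finitely many boundary and periodic wrap-around terms that distinguish the $N$-periodic sum from its infinite analogue. The key quantitative input is the decay estimate of Lemma~\ref{Lem:BasicProps}(iii): for any profile $u\in\calC$ and any index $j$ close to the boundary of $Z_N$, one has $2\abs{j}+1\gtrsim N$ and hence $u_j=\nDO{\norm{u}/\sqrt{N}}$. This forces every boundary value to be small, uniformly in the bulk norm.

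For the first assertion, I intend to read off the difference $\calP_N\at{u_N}-\calP_\infty\at{R_Nu_N}$ by direct computation in each of the four cases (on-site / inter-site, $N$ even / odd). In the on-site even case, where $Z_N\setminus\tilde{Z}_N=\{M\}$, a short calculation gives
\begin{align*}
\calL_N\at{u_N}-\calL_\infty\at{R_Nu_N}=4u_{N,M}u_{N,M-1},\qquad
\calW_N\at{u_N}-\calW_\infty\at{R_Nu_N}=\Psi\at{u_{N,M}^2},
\end{align*}
and the remaining cases yield analogous expressions: a fixed, $N$-independent sum of products of boundary values for the $\calL$-part, and at most one term $\Psi\at{u_{N,j_\ast}^2}$ for the $\calW$-part. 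By Lemma~\ref{Lem:BasicProps}(iii) every boundary value is bounded by $\norm{u_N}/\sqrt{N}$, so the $\calL$-difference is $\nDO{\norm{u_N}^2/N}$. For the $\calW$-difference I invoke $\Psi\at{x}\leq x\Psi^\prime\at{x}$ from Assumption~\ref{Intro:Ass} together with the continuity $\Psi^\prime\at{x}\to0$ as $x\to0$, which gives $\Psi\at{u_{N,j_\ast}^2}=\nDo{\norm{u_N}^2/N}$. Since $\norm{u_N}^2/N\leq\norm{u_N}/\sqrt{N}$ in the regime of interest, both pieces combine into the claimed bound $\nDO{\norm{u_N}/\sqrt{N}}$.

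For the second assertion, I would apply the first to $u_N\deq E_Nu_\infty$, using the uniform bound $\norm{E_Nu_\infty}\leq\norm{u_\infty}$ to conclude $\abs{\calP_N\at{E_Nu_\infty}-\calP_\infty\at{R_NE_Nu_\infty}}\to0$. By construction $R_NE_Nu_\infty=R_Nu_\infty$, so it remains to show $\calP_\infty\at{R_Nu_\infty}\to\calP_\infty\at{u_\infty}$. This is a dominated convergence argument in $\ell^2$: the truncation $R_Nu_\infty$ converges pointwise to $u_\infty$ and is dominated pointwise by $\abs{u_\infty}$. For the linear term, $\abs{u_{\infty,j}u_{\infty,j\pm1}}\leq\tfrac{1}{2}\at{u_{\infty,j}^2+u_{\infty,j\pm1}^2}$ is termwise summable. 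For the nonlinear term, the summability of $\Psi\at{u_{\infty,j}^2}$ follows once more from $u_{\infty,j}\to0$ together with $\Psi\at{x}\leq x\Psi^\prime\at{x}$ and $\Psi^\prime\at{0}=0$, which make $\Psi\at{u_{\infty,j}^2}$ eventually bounded by $u_{\infty,j}^2$.

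The main obstacle is the treatment of the nonlinear energy $\calW$, because Assumption~\ref{Intro:Ass} allows $\Psi$ to grow faster than any polynomial and provides no global Lipschitz control. The substitute will be the pair $\Psi\at{x}\leq x\Psi^\prime\at{x}$ and $\Psi^\prime\at{0}=0$, which are just strong enough to make the nonlinear contributions at indices where $u$ is small vanish in the limit $N\to\infty$. Everything else is bookkeeping of boundary terms whose number is bounded uniformly in $N$.
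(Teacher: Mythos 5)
Your proposal is correct and follows essentially the same route as the paper's (very terse) proof: the first claim is obtained from the explicit boundary/wrap-around terms together with the decay estimate $u_j\leq\norm{u}/\sqrt{2\abs{j}+1}$ of Lemma~\ref{Lem:BasicProps}, with $\Psi\at{x}\leq x\Psi^\prime\at{x}$ and $\Psi^\prime\at{0}=0$ controlling the single nonlinear boundary term, and the second claim reduces to the continuity of $\calP_\infty$ along $R_Nu_\infty\to u_\infty$. You merely fill in the bookkeeping the paper leaves implicit (including the useful observation that $\sum_j\Psi\nat{u_{\infty,j}^2}$ is finite on $\calC_\infty$), and your caveat that the stated bound holds in the regime $\norm{u_N}\leq\sqrt{N}$ is consistent with how the lemma is used, namely for $\norm{u_N}^2=\varrho$ fixed.
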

\begin{proof}
The first claim is a consequence of \eqref{Eqn:Misc1} and Lemma \ref{Lem:BasicProps}. The
second one follows from $R_Nu_\infty\to{u_\infty}$ as $N\to\infty$.
\end{proof}
We are now able to prove the convergence of maxima.
\begin{lemma}
\label{Lem:ConvergenceSuprema}%
We have $T_N\pair{\alpha}{\varrho}\to{}T_\infty\pair{\alpha}{\varrho}$ as $N\to\infty$.
\end{lemma}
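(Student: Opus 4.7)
The strategy is the standard two-sided comparison $\liminf T_N \geq T_\infty$ and $\limsup T_N \leq T_\infty$, transferring between finite and infinite $N$ via the operators $R_N$ and $E_N$ from Lemma~\ref{Lem:REProps}. The only non-trivial point is that these operators are norm-non-expansive but not isometries, so a careful rescaling onto $\calS_{\dblprm{\infty}{\varrho}}$ or $\calS_{\dblprm{N}{\varrho}}$ is required; this is handled using the monotonicity/scaling inequality $\calW_M(\mu u)\geq\mu^2\calW_M(u)$ for $\mu\geq 1$ established inside the proof of Lemma~\ref{Lem:TProps1}.

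\emph{Lower bound.} I fix $u_\infty\in\calC_\infty\cap\calS_{\dblprm{\infty}{\varrho}}$ and let $\varrho_N:=\norm{E_Nu_\infty}^2\leq\varrho$. Since $R_Nu_\infty\to u_\infty$ in $\ell^2$ as $N\to\infty$ and the norm of $E_Nu_\infty$ in $\ell^2_N$ equals the $\ell^2$-norm of the restriction of $u_\infty$ to $\tilde Z_N\cap Z_N$, one has $\varrho_N\to\varrho$. Setting $\lambda_N:=\sqrt{\varrho/\varrho_N}\geq 1$, the profile $u_N:=\lambda_NE_Nu_\infty$ lies in $\calC_N\cap\calS_{\dblprm{N}{\varrho}}$ and satisfies $\lambda_N\to 1$. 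Because $\calL_N$ is homogeneous of degree two and $\Psi$ is continuous at each value $u_{\infty,j}^2$ with the tails controlled by the decay estimate in Lemma~\ref{Lem:BasicProps}, one has $\calP_N(u_N)=\lambda_N^2\alpha\calL_N(E_Nu_\infty)+\calW_N(\lambda_NE_Nu_\infty)\to\calP_\infty(u_\infty)$, using the second conclusion of Lemma~\ref{Lem:REProps}. Consequently $\alpha\varrho\,\liminf_{N\to\infty}T_N\pair{\alpha}{\varrho}\geq\calP_\infty(u_\infty)$, and taking the supremum over $u_\infty$ gives $\liminf T_N\geq T_\infty$.

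\emph{Upper bound.} For each finite $N$, by compactness of $\calC_N\cap\calS_{\dblprm{N}{\varrho}}$ in $\ell^2_N\cong\Rset^N$ and continuity of $\calP_N$, there exists $u_N\in\calC_N\cap\calS_{\dblprm{N}{\varrho}}$ with $\calP_N(u_N)=\alpha\varrho T_N\pair{\alpha}{\varrho}$. Set $v_N:=R_Nu_N\in\calC_\infty$ and $\varrho_N:=\norm{v_N}^2\leq\varrho$. The first assertion of Lemma~\ref{Lem:REProps} gives $\calP_N(u_N)\leq\calP_\infty(v_N)+\nDo{1}$. If $\varrho_N>0$, define $\mu_N:=\sqrt{\varrho/\varrho_N}\geq 1$; the scaling inequality used in Lemma~\ref{Lem:TProps1} yields $\calP_\infty(\mu_Nv_N)\geq\mu_N^2\calP_\infty(v_N)\geq\calP_\infty(v_N)$. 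Since $\mu_Nv_N\in\calC_\infty\cap\calS_{\dblprm{\infty}{\varrho}}$, the definition of $T_\infty$ gives $\calP_\infty(\mu_Nv_N)\leq\alpha\varrho T_\infty\pair{\alpha}{\varrho}$. (The degenerate case $\varrho_N=0$ cannot occur for $N$ large, since $T_N\geq 2$ by \eqref{Eqn:TProps3} would force $v_N\neq 0$ via the first part of Lemma~\ref{Lem:REProps}.) Combining, $\alpha\varrho T_N\pair{\alpha}{\varrho}\leq\alpha\varrho T_\infty\pair{\alpha}{\varrho}+\nDo{1}$, hence $\limsup T_N\leq T_\infty$.

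The two inequalities together give the claim. The main (mild) obstacle is the norm defect introduced by the embeddings: all the non-trivial content of the proof sits in showing that the rescalings $\lambda_N,\mu_N$ tend to one sufficiently uniformly and that the energy $\calP$ reacts continuously to them, which is exactly what the scaling inequality for $\calW$ and the approximation estimate of Lemma~\ref{Lem:REProps} are designed for.
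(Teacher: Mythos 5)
Your proof is correct and follows essentially the same route as the paper: a two-sided comparison between $T_N$ and $T_\infty$ using the operators $R_N$, $E_N$ and the approximation estimates of Lemma~\ref{Lem:REProps}. The only cosmetic difference is that you absorb the norm defect by explicitly rescaling the test profiles back onto the sphere via the inequality $\calW(\lambda u)\geq\lambda^2\calW(u)$ for $\lambda\geq1$, whereas the paper invokes the monotonicity of $T_N$ in $\varrho$ from Lemma~\ref{Lem:TProps1} --- which is the same scaling inequality in disguise.
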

\begin{proof}
For fixed $u_\infty\in\calC_\infty\cap\calS_\dblprm{\infty}{\varrho}$, Lemma \ref{Lem:TProps1}
and Lemma \ref{Lem:REProps} provide
\begin{align*}
\calP_\infty\nat{u_\infty}
&\leq{\calP_N\nat{E_Nu_\infty}}+\nDo{1}
\leq%
{\alpha\varrho}T_N\pair{\alpha}{\norm{E_Nu_\infty}^2}+\nDo{1}
\\&%
\leq{\alpha\varrho}T_N\pair{\alpha}{\varrho}+\nDo{1},
\end{align*}
with $\Do{1}$ being arbitrary small for large $N$. Passing to the limit $N\to\infty$ and
taking the supremum over $u_\infty$ gives
\begin{math}
T_\infty\pair{\alpha}{\varrho}\leq\liminf_{N\to\infty}T_N\pair{\alpha}{\varrho}. %
\end{math}
Towards the reverse estimate, we denote the maximizer of $\calP_N$ in
$\calC_N\cap\calS_\dblprm{N}{\varrho}$ by $\hat{u}_N$, and thanks to Lemma \ref{Lem:REProps}
we find
\begin{align*}
{\alpha\varrho}T_N\pair{\alpha}{\varrho}=\calP_N\nat{\hat{u}_{N}}
\leq%
{\calP_\infty\nat{E_N\hat{u}_N}}+
\nDo{1}\leq{\alpha\varrho}T_{\infty}\pair{\alpha}{\varrho}+\nDo{1},
\end{align*}
which implies
$\limsup_{N\to\infty}T_N\pair{\alpha}{\varrho}\leq{}T_\infty\pair{\alpha}{\varrho}$.
\end{proof}
%
%
%
%
%
%-------------------------------------------------------------------------------------------
\section{Existence of standing waves}\label{sec:Waves}
%-------------------------------------------------------------------------------------------
%
%-------------------------------------------------------------------------------------------
\subsection{Periodic waves}%\label{sec:PeriodicWaves}
%-------------------------------------------------------------------------------------------
%
%
We now prove the existence of periodic waves with $N<\infty$. To this end we consider the
reversed gradient flow for $\calP_N$ under the constraint $\norm{u}^2=\varrho$, that is
\begin{align}
\label{Eqn:GradientFlow}
\tfrac{\dint}{\dint\tau}u=F_N\at{u}
,\quad
F_N\at{u}=\partial\calP_N\at{u}-\si_N\at{u}u,\quad
\si_N\at{u}=\frac{\skp{\partial\calP_N\at{u}}{u}}{\norm{u}^2},\quad
\end{align}
where $\tau\geq0$ denotes the flow time, and $\si_N$ is a (dynamical) Lagrangian multiplier.
Constrained gradient flows are well known in the context of ground states for Schr\"odinger equations and sometimes referred to as
\emph{imaginary time methods}, see \cite{BD04,JL07} and references therein.
\begin{lemma}
\label{Lem:FlowProps} %
The gradient flow \eqref{Eqn:GradientFlow} has the following properties:
\begin{enumerate}
\item
$\calS_\dblprm{N}{\varrho}$ is an invariant set.
\item
$\calP_N\at{u}$ is strictly increasing on each non-stationary trajectory.
\item
Each stationary point $u\in\ell^2_N$ of \eqref{Eqn:GradientFlow} is a standing wave with
frequency $\si=\si_N\at{u}$.
\end{enumerate}
\end{lemma}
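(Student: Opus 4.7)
The plan rests on the observation that $\si_N\at{u}$ is chosen precisely so that $F_N\at{u}$ is the orthogonal projection of $\partial\calP_N\at{u}$ onto the tangent space $u^\perp$ of the sphere $\calS_\dblprm{N}{\varrho}$. Unwinding the definition of $\si_N\at{u}$ indeed yields
\begin{align*}
\skp{F_N\at{u}}{u}=\skp{\partial\calP_N\at{u}}{u}-\si_N\at{u}\norm{u}^2=0,
\end{align*}
and from this single orthogonality identity all three assertions will follow by the standard arguments for constrained (``projected'') gradient flows.

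For (i), I would simply differentiate the squared norm along an orbit to obtain $\tfrac{d}{d\tau}\norm{u}^2=2\skp{u}{\dot{u}}=2\skp{u}{F_N\at{u}}=0$, so $\calS_\dblprm{N}{\varrho}$ is invariant. For (ii), I would differentiate $\calP_N$ along an orbit and use the decomposition $\partial\calP_N\at{u}=F_N\at{u}+\si_N\at{u}u$ together with the orthogonality above:
\begin{align*}
\tfrac{d}{d\tau}\calP_N\at{u}=\skp{\partial\calP_N\at{u}}{F_N\at{u}}=\norm{F_N\at{u}}^2+\si_N\at{u}\skp{u}{F_N\at{u}}=\norm{F_N\at{u}}^2.
\end{align*}
This expression is non-negative and vanishes precisely when $F_N\at{u}=0$, so $\calP_N$ is strictly increasing along every non-stationary trajectory.

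For (iii), a stationary point of the flow satisfies $F_N\at{u}=0$, that is, $\partial\calP_N\at{u}=\si_N\at{u}u$. Substituting the explicit componentwise formula $\partial\calP_N\at{u}_j=2\al\at{u_{j+1}+u_{j-1}}+2\Psi^\prime\at{u_j^2}u_j$ from Lemma~\ref{Lem:BasicProps} and dividing by the factor of two inherent in $\partial\calN\at{u}=2u$ produces exactly the standing wave equation \eqref{Eqn:StandingWaveEquation}, so the stationary profile is a standing wave of DNLS whose frequency is read off from the Lagrange multiplier $\si_N\at{u}$ in \eqref{Eqn:Intro.EULA}.

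I do not expect any genuine obstacle here: the entire lemma reduces to a one-line chain-rule computation supported by the orthogonality identity displayed above. The only point needing some care is the factor-of-two bookkeeping between the $\ell^2_N$-gradient $\partial\calP_N\at{u}$ and the frequency in \eqref{Eqn:StandingWaveEquation}, which has to be tracked consistently in order to identify $\si_N\at{u}$ with the standing wave frequency $\si$ appearing in the Euler--Lagrange equation.
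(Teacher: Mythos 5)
Your proposal is correct and follows essentially the same route as the paper: invariance of $\calS_\dblprm{N}{\varrho}$ from the orthogonality $\skp{F_N\at{u}}{u}=0$, monotonicity of $\calP_N$ from $\tfrac{\ddiff}{\ddiff\tau}\calP_N\at{u}=\skp{\partial\calP_N\at{u}}{F_N\at{u}}\geq0$ with equality exactly at stationary points, and identification of stationary points with solutions of \eqref{Eqn:StandingWaveEquation}. The only cosmetic difference is that you express this derivative as $\norm{F_N\at{u}}^2$ via the orthogonal decomposition, whereas the paper writes the algebraically identical Cauchy--Schwarz form $\norm{u}^{-2}\bigl(\norm{\partial\calP_N\at{u}}^2\norm{u}^2-\abs{\skp{\partial\calP_N\at{u}}{u}}^2\bigr)$; your closing remark on the factor-of-two bookkeeping between $\partial\calP_N$, $\partial\calN$ and the frequency $\si$ is exactly the right point to watch.
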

\begin{proof}
The definition of $\si$ implies
\begin{math}
\tfrac{\dint}{\dint\tau}\norm{u}^2=\skp{F_N\at{u}}{u}=0
\end{math}, %
so $\norm{u}$ is conserved under the flow. By a direct computation we find
\begin{align*}
\tfrac{\dint}{\dint\tau}\calP_N\at{u}=\skp{\partial\calP_N\at{u}}{F_N\at{u}}
=%
\frac{1}{\norm{u}^2}\at{\norm{\partial\calP_N\at{u}}^2\norm{u}^2-
\abs{\skp{\partial\calP_N\at{u}}{u}}^2}\geq0
\end{align*}
with $\tfrac{\dint}{\dint\tau}\calP_N\at{u}=0$ if and only if $u$ and $\partial\calP_N\at{u}$
are collinear, i.e., $\partial\calP_N\at{u}=\si_N\at{u}u$.
\end{proof}
The next result provides a key ingredient for our existence result.
\begin{lemma}
\label{Lem:InvarianceC}%
$\calC_N$ is invariant under the gradient flow \eqref{Eqn:GradientFlow}.
\end{lemma}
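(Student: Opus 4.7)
The plan is to verify a tangent-cone condition in the spirit of Nagumo's invariance theorem: for every $u\in\calC_N$ the vector $F_N\at{u}$ must lie in the tangent cone to $\calC_N$ at $u$. Since $\calC_N$ is closed, convex, and in fact polyhedral, being cut out by the linear equalities $u_{-j}=u_j$ together with the linear inequalities $u_j\geq 0$ and $u_{j-1}-u_j\geq 0$, and since $F_N$ is locally Lipschitz on $\ell^2_N$ for finite $N$ (with trajectories a priori bounded because $\calS_\dblprm{N}{\varrho}$ is invariant by Lemma \ref{Lem:FlowProps}), classical invariance results for closed convex sets then yield forward invariance of $\calC_N$. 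The tangent cone at a polyhedral point is the intersection of the half-spaces associated with the \emph{active} constraints, so it suffices to check each active inequality separately.

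First I would handle the symmetry $u_{-j}=u_j$ by a direct computation showing $F_N\at{u}_{-j}=F_N\at{u}_{j}$ whenever $u$ is even; uniqueness of ODE solutions then keeps trajectories in the even subspace. For the positivity constraint, at a point $u\in\calC_N$ with $u_j=0$ the cancellations $\Psi^\prime\at{0}\cdot 0=0$ and $\si_N\at{u}\cdot 0=0$ leave $F_N\at{u}_j=2\al\at{u_{j-1}+u_{j+1}}\geq 0$ because the neighbours are non-negative. For the unimodality constraint, at a point with $u_{j-1}=u_j$ the nonlinear contributions $2\Psi^\prime\at{u_{j-1}^2}u_{j-1}$ and $2\Psi^\prime\at{u_j^2}u_j$ cancel exactly, the multiplier term $-\si_N\at{u}\at{u_{j-1}-u_j}$ vanishes, and one is left with
\begin{align*}
F_N\at{u}_{j-1}-F_N\at{u}_j = 2\al\at{u_{j-2}-u_{j+1}}\geq 0,
\end{align*}
where the sign comes from the two-step monotonicity $u_{j-2}\geq u_{j-1}=u_j\geq u_{j+1}$ already built into $\calC_N$. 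The smallest-index case ($j=1$ in the on-site setting and $j=\tfrac{3}{2}$ in the inter-site setting) is handled identically once one interprets $u_{j-2}$ via the symmetry, namely $u_{-1}=u_1$ resp.\ $u_{-1/2}=u_{1/2}$.

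The main obstacle I expect is making the tangent-cone condition rigorous at corner points of $\calC_N$ where several constraints become active simultaneously. I would resolve this by invoking the polyhedral structure of $\calC_N$ directly, so that $T_u\calC_N$ is the explicit intersection of the active half-spaces and the elementary sign checks above apply coordinate-by-coordinate. As a safer fallback one could perturb the dynamics, working with slightly shrunk sets of the form $\calC_N^\eps=\{u\in\calC_N\,:\,u_j\geq\eps\varphi_j,\ u_{j-1}-u_j\geq\eps\psi_j\}$ with positive weights $\varphi_j,\,\psi_j$ chosen so that the vector field points strictly inward on the boundary of $\calC_N^\eps$, and then let $\eps\to 0$. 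Either route reduces the invariance claim to the finite list of linear sign checks computed above.
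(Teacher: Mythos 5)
Your proposal is correct and follows essentially the same route as the paper: the decisive content in both is the coordinate-wise sign check at the active constraints, where the nonlinear term $2\Psi^\prime\nat{u_j^2}u_j$ and the multiplier term $-\si_N\nat{u}u_j$ drop out and the remaining linear coupling $2\alpha\nat{u_{j+1}+u_{j-1}}$ has the right sign by non-negativity, evenness, and the two-step monotonicity $u_{j-2}\geq u_{j-1}=u_j\geq u_{j+1}$. The paper converts these boundary inequalities into invariance precisely via your fallback option --- a perturbed flow $\tfrac{\dint}{\dint\tau}v=F_N\nat{v}+\eps$ with $\eps$ an interior point of $\calC_N$, a first-exit-time contradiction, and the limit $\norm{\eps}\to0$ --- rather than by citing Nagumo's tangent-cone theorem for polyhedral convex sets.
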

\begin{proof}
Within this proof let $u\at{0}\in\calC_N$ be some fixed initial data and denote by
$u\at{\tau}\in\ell^2_N$ with $\tau\geq0$ the corresponding forward in time solution to
\eqref{Eqn:GradientFlow}. We first note that
\begin{align*}
\calE_N=\{u\;:\;u_j=u_{-j}\quad\forall\quad{}j\in{Z_N}\},
\end{align*}
the set of all even profiles, is a closed linear subspace of $\ell^2_N$ and moreover invariant under $\calF_N$. We thus conclude that
$u\at{0}\in\calE_N$ implies $u\at{\tau}\in\calE_N$ for all $\tau$. In order to show the
remaining assertions we proceed as usual and consider the perturbed initial value problem
\begin{align}
\label{Lem:InvarianceC.Eqn1} %
\tfrac{\dint}{\dint\tau}v=F_N\at{v}+\eps
,\quad%
v\at{0}=u\at{0}+\eps
\end{align}
where the perturbation $\eps=\at{\eps_j}_j$ is supposed to be an inner point of $\calC_N$
with respect to the induced topology of $\calE_N$. First suppose for contradiction that there
exist $\tau_0>0$  and $j_0\in{Z_N}$ such that
\begin{align*} v_{j_0}\at{\tau_0}=0
\quad\text{and}\quad%
v_{j}\at{\tau}\geq{0} \quad\text{for all}
\quad{j\in{Z_N}}\quad\text{and}\quad 0\leq\tau<\tau_0.
\end{align*}
This implies $\tfrac{\dint}{\dint\tau}v_{j_0}\at{\tau_0}\leq0$, so 
\eqref{Lem:InvarianceC.Eqn1} provides
\begin{align*}
\tfrac{\dint}{\dint\tau}v_{j_0}\at{\tau_0}
=%
F_\dblprm{N}{j_0}\at{v\at{\tau_0}}+\eps_{j_0}=\alpha
\at{v_{j_0+1}\at{\tau_0}+v_{j_0-1}\at{\tau_0}}+\eps_{j_0}\geq\eps_{j_0}>0,
\end{align*}
which is the desired contradiction. Secondly, assuming
\begin{align*}
v_{j_0-1}\at{\tau_0}=v_{j_0}\at{\tau_0}
\quad\text{and}\quad%
v_{j-1}\at{\tau}\geq{v_{j}}
\quad\text{for all}\quad1\leq{j}\leq\max{Z_N}
\quad\text{and}\quad0\leq\tau<\tau_0,
\end{align*}
we find a contradiction by
\begin{align*}
0\geq\tfrac{\dint}{\dint\tau}v_{j_0-1}\at{\tau_0}-
\tfrac{\dint}{\dint\tau}v_{j_0}\at{\tau_0}=
\alpha\at{v_{j_0-2}+v_{j_0}-v_{j_0+1}-v_{j_0-1}}+\eps_{j_0-1}-\eps_{j_0}\geq
\eps_{j_0-1}-\eps_{j_0}>0.
\end{align*}
In conclusion, we have shown $v\at{\tau}\in\calC_N$ for all $\tau\geq0$ and all perturbations
$\eps$, and $u\at{\tau}\in\calC_N$ follows by $\norm{\eps}\to0$.
\end{proof}
\begin{theorem}
\label{Theo:PeriodicWaves}
For each $\varrho>0$ there exists a standing wave $\pair{u}{\si}$ with
$u\in\calC_N\cap\calS_\dblprm{N}{\varrho}$ and
$\si\varrho\geq{}\calP_N\at{u}=\alpha\varrho{}T_N\pair{\alpha}{\varrho}>2\alpha\varrho$.
\end{theorem}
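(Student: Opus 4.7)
The plan is to combine a direct compactness argument with the invariance properties of the reversed gradient flow \eqref{Eqn:GradientFlow} established in Lemmas \ref{Lem:FlowProps} and \ref{Lem:InvarianceC}. For finite $N$ the admissible set $\calC_N\cap\calS_\dblprm{N}{\varrho}$ is a closed and bounded subset of $\ell^2_N\cong\Rset^N$, hence compact; it is non-empty because the constant profile $u_j\equiv\sqrt{\varrho/N}$ lies in it. Since $\Psi$ is continuous by Assumption \ref{Intro:Ass}, the functional $\calP_N$ is continuous on $\ell^2_N$ and therefore attains its maximum at some $\hat u\in\calC_N\cap\calS_\dblprm{N}{\varrho}$.

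Second, I would promote $\hat u$ to a standing wave by a trapping argument for the constrained gradient flow. Let $u(\tau)$ denote the forward-in-time solution of \eqref{Eqn:GradientFlow} with $u(0)=\hat u$. By Lemma \ref{Lem:FlowProps}(i) and Lemma \ref{Lem:InvarianceC}, the trajectory remains inside $\calC_N\cap\calS_\dblprm{N}{\varrho}$ for all $\tau\geq 0$. Maximality of $\hat u$ forces $\calP_N(u(\tau))\leq\calP_N(u(0))$, whereas Lemma \ref{Lem:FlowProps}(ii) asserts that $\calP_N$ is strictly increasing along non-stationary trajectories; the only possibility is that the trajectory is stationary, so $F_N(\hat u)=0$, and by Lemma \ref{Lem:FlowProps}(iii) the profile $\hat u$ is a standing wave with frequency $\sigma$ determined by $2\sigma\hat u=\partial\calP_N(\hat u)$.

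For the quantitative bounds on $\sigma\varrho$, I would take the scalar product of this stationarity identity with $\hat u$, yielding
\begin{align*}
2\sigma\varrho \;=\; \skp{\partial\calP_N(\hat u)}{\hat u} \;=\; 2\alpha\calL_N(\hat u)+2\sum_{j\in Z_N}\Psi'(\hat u_j^2)\hat u_j^2.
\end{align*}
The super-linearity estimate $x\Psi'(x)\geq\Psi(x)$ from Assumption \ref{Intro:Ass} then gives $\sigma\varrho\geq\alpha\calL_N(\hat u)+\sum_j\Psi(\hat u_j^2)=\calP_N(\hat u)$. Since $\hat u$ realizes the supremum defining $T_N$, this equals $\alpha\varrho\,T_N(\alpha,\varrho)$, and the strict inequality $T_N(\alpha,\varrho)>2$ is precisely the already-established estimate \eqref{Eqn:TProps3}.

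The main obstacle sits in the second step: because $\calC_N$ is a convex cone with non-trivial boundary (some coordinates may vanish or consecutive entries may coincide), a naive Lagrange-multiplier treatment of \eqref{Intro:OptProb} would yield only a KKT inclusion featuring extra multipliers attached to each active shape constraint. Lemma \ref{Lem:InvarianceC} is exactly what kills these spurious multipliers: since the constrained flow direction $\partial\calP_N(\hat u)-\sigma_N(\hat u)\hat u$ itself keeps $\calC_N$ invariant, no projection onto a tangent cone of $\calC_N$ needs to be inserted, and the full Euler--Lagrange equality holds.
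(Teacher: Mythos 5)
Your proposal is correct and follows essentially the same route as the paper's proof: compactness of $\calC_N\cap\calS_\dblprm{N}{\varrho}$ in $\ell^2_N\cong\Rset^N$ yields a maximizer, the invariance of this set under the reversed gradient flow (Lemmas \ref{Lem:FlowProps} and \ref{Lem:InvarianceC}) forces the maximizer to be a stationary point and hence a standing wave, and testing the stationarity identity with $u$ together with the super-linearity $x\Psi^\prime\at{x}\geq\Psi\at{x}$ and \eqref{Eqn:TProps3} gives the stated chain of (in)equalities. Your added remark on why the flow invariance eliminates the KKT multipliers attached to the active shape constraints, and your careful handling of the factor of $2$ between $\partial\calP_N$ and the standing wave equation, are consistent with (and slightly more explicit than) the paper's argument.
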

\begin{proof}
Due to $N<\infty$ the set $\calS_\dblprm{N}{\varrho}\cap\calC_N$ is compact in $\ell^2_N$,
and hence there exist a maximizer $u$ for $\calP_N$ on this set. Lemma \ref{Lem:FlowProps}
combined with Lemma \ref{Lem:InvarianceC} then implies that $u$ is stationary point of
\eqref{Eqn:GradientFlow} and solves the standing wave equations
\eqref{Eqn:StandingWaveEquation} with frequency $\si=\si_N\at{u}$. Finally, testing
\eqref{Eqn:StandingWaveEquation} with $u$ gives
\begin{align*}
\si_N\at{u}\varrho
=%
\alpha\calL_N\at{u}+\sum\limits_{j\in{Z_N}}\Psi^\prime\at{u^2_j}u^2_j
\geq
\alpha\calL_N\at{u}+\sum\limits_{j\in{Z_N}}\Psi\at{u^2_j}=\calP_N\at{u}
\end{align*}
due to Assumption \ref{Intro:Ass}, and \eqref{Eqn:TProps3} completes the proof.
\end{proof}
We conclude this section with two remarks.
\begin{enumerate}
\item
Theorem \ref{Theo:PeriodicWaves} provides the existence of two families of periodic waves
as it holds in both the on-site and the inter-site setting.
\item
Theorem \ref{Theo:PeriodicWaves} does not exclude that the maximizer is equal to the
constant profile $\sqrt{\varrho/N}$. However, Lemma \ref{Lem:TProps2a}, combined with
Lemma \ref{Lem:ConvergenceSuprema} and \eqref{Eqn:TProps4}, ensures that the profile is
non-constant for large $N$, provided that $(A1)$, $(A2)$, or $(A3)$ is satisfied.
\end{enumerate}
%
%
%-------------------------------------------------------------------------------------------
\subsection{Homoclinic waves}\label{sec:HomoclinicWaves}
%-------------------------------------------------------------------------------------------
%
In this section we prove that the period waves from Theorem \ref{Theo:PeriodicWaves} converge
to homoclinic waves provided that the strict maximum condition \eqref{Cond:TSC} is satisfied.
To this end we fix $\varrho>0$ and consider a sequence of profiles
$\at{u_N}_N\subset\calC_\infty$ such that
\begin{enumerate}
\item
$u_N$ is the image of a maximizer of $\calP_N$ on
$\calC_N\cap\calS_{\dblprm{N}{\varrho}}$ under the restriction map $R_N$ from
\eqref{Eqn:DefR},
\item
$\si_N$ is the corresponding frequency.
\end{enumerate}
According to these definitions and Lemma \ref{Lem:BasicProps} we have
\begin{align}
\label{Eqn:SLim.Prop0}
0 \leq u\dblidx{N}{j}=\sqrt{\varrho/\at{2\abs{j}+1}}
\quad\forall\;j,\qquad\qquad
u\dblidx{N}{j}=0\quad\forall\;\abs{j}\geq\at{N+1}/2.
\end{align}
Moreover, Lemma \ref{Lem:ConvergenceSuprema} and Theorem \ref{Theo:PeriodicWaves} provide
\begin{align}
\label{Eqn:SLim.Prop1}
\si_{N}u\dblidx{N}{j}
=%
\alpha\at{u\dblidx{N}{j+1}+u\dblidx{N}{j-1}}+
\Psi^\prime\at{u\dblidx{N}{j}^2}u\dblidx{N}{j}
\quad\text{for all $j,\,N$ with $\abs{j}\leq{\at{N-1}/2}$},
\end{align}
as well as
\begin{align}
\label{Eqn:SLim.Prop2}
\si_{N}\varrho\geq\calP_\infty\at{u_N}+\Do{1}
=%
\alpha\varrho{T_\infty\pair{\varrho}{\alpha}}+\nDo{1}.
\end{align}
We next show by using the strict maximum condition that the profiles $u_N$ are localized.
\begin{lemma}
%\label{Lem:SLim.Decay}
%
Suppose that \eqref{Cond:TSC} is satisfied. Then,
\begin{align*}
\ul{\sigma}=\liminf\limits_{N\to\infty}\sigma_N>2\alpha
\end{align*}
and there exist two positive constants $C$ and $d$ such that
\begin{align}
\label{Lem:SLim.Decay.Eqn1}
u\dblidx{N}{j}\leq{C}\exp\at{-d\,\abs{j}}
\end{align}
holds for all $j,\,N$.
\end{lemma}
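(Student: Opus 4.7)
The first assertion $\ul{\sigma}>2\alpha$ is essentially immediate from the material at hand. Dividing \eqref{Eqn:SLim.Prop2} by $\varrho$ and passing to $\liminf_{N\to\infty}$ gives
\begin{align*}
\ul{\sigma}\geq\alpha\,T_\infty\pair{\alpha}{\varrho}>2\alpha,
\end{align*}
the strict inequality being precisely the strict maximum condition \eqref{Cond:TSC}. I may therefore fix $\eps>0$ so small that $\ul{\sigma}-2\alpha>3\eps$, after which there is $N_0$ with $\sigma_N\geq\ul{\sigma}-\eps$ for all $N\geq N_0$.

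\textbf{Tail setup.} The next step is to identify a cut-off radius $J_0$, independent of $N$, beyond which the nonlinear term in the profile equation \eqref{Eqn:SLim.Prop1} is negligible. Since Assumption~\ref{Intro:Ass} makes $\Psi^\prime$ continuous with $\Psi^\prime\at{0}=0$, there is $\delta>0$ with $\Psi^\prime\at{s}\leq\eps$ for all $0\leq s\leq\delta$. The uniform pointwise bound $u\dblidx{N}{j}^2\leq\varrho/\at{2\abs{j}+1}$ from \eqref{Eqn:SLim.Prop0} then provides an $N$-independent integer $J_0$ such that $u\dblidx{N}{j}^2\leq\delta$, and hence $\Psi^\prime\at{u\dblidx{N}{j}^2}\leq\eps$, whenever $\abs{j}\geq J_0$. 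Combined with $\sigma_N\geq\ul{\sigma}-\eps$ this yields the uniform gap
\begin{align*}
\sigma_N-\Psi^\prime\at{u\dblidx{N}{j}^2}\geq\ul{\sigma}-2\eps\qquad\text{for $N\geq N_0$ and $\abs{j}\geq J_0$.}
\end{align*}

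\textbf{Recurrence and geometric contraction.} For $N\geq N_0$ and $J_0\leq j\leq\at{N-1}/2$ I rearrange \eqref{Eqn:SLim.Prop1} into
\begin{align*}
u\dblidx{N}{j-1}+u\dblidx{N}{j+1}=\frac{\sigma_N-\Psi^\prime\at{u\dblidx{N}{j}^2}}{\alpha}\,u\dblidx{N}{j}\geq 2\mu\,u\dblidx{N}{j},\qquad\mu:=\frac{\ul{\sigma}-2\eps}{2\alpha}>1,
\end{align*}
the strict inequality $\mu>1$ being secured by my choice of $\eps$. Since $u_N\in\calC_\infty$ implies the unimodality bound $u\dblidx{N}{j+1}\leq u\dblidx{N}{j}$ for $j\geq 0$, this collapses to the one-step contraction
\begin{align*}
u\dblidx{N}{j}\leq\frac{1}{2\mu-1}\,u\dblidx{N}{j-1}.
\end{align*}
Iterating from $j=J_0$, and using $u\dblidx{N}{j}=0$ for $\abs{j}\geq\at{N+1}/2$ (which trivialises the bound near the boundary of the periodicity cell), yields $u\dblidx{N}{j}\leq\sqrt{\varrho}\,\at{2\mu-1}^{-\at{j-J_0}}$ for all $j\geq J_0$. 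Evenness of $u_N$ transfers the estimate to $j\leq -J_0$, while the trivial bound $u\dblidx{N}{j}\leq\sqrt{\varrho}$ on the finite range $\abs{j}<J_0$ is absorbed into an enlarged prefactor; this delivers \eqref{Lem:SLim.Decay.Eqn1} with $d=\log\at{2\mu-1}>0$ and some $C>0$, both $N$-independent.

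\textbf{Main obstacle.} The strict inequality in \eqref{Cond:TSC} is indispensable: it is precisely what upgrades $\mu\geq 1$ to $\mu>1$ and so turns the three-term recurrence into a genuine geometric contraction; without strictness the decay rate $2\mu-1$ would degenerate to $1$. The only other subtlety is ensuring that every constant ($\eps$, $\delta$, $J_0$, $N_0$, $\mu$) is chosen uniformly in $N$, which is arranged by applying the a priori pointwise bound \eqref{Eqn:SLim.Prop0} \emph{before} invoking the profile equation.
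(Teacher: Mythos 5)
Your proof is correct and follows essentially the same route as the paper's: deduce $\ul{\sigma}>2\alpha$ from \eqref{Eqn:SLim.Prop2} and \eqref{Cond:TSC}, use the a priori bound \eqref{Eqn:SLim.Prop0} to make $\Psi^\prime\at{u\dblidx{N}{j}^2}$ uniformly small beyond an $N$-independent cut-off, and then combine the profile equation \eqref{Eqn:SLim.Prop1} with unimodality to get a one-step geometric contraction (your factor $1/\at{2\mu-1}$ is exactly the paper's $\kappa=\alpha/\at{\sigma_\ast-\alpha}$ with $\sigma_\ast=\ul{\sigma}-2\eps$).
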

\begin{proof}
The first claim follows from \eqref{Eqn:SLim.Prop2}.
Now choose $\sigma_\star$ and $j_\star>1$ such that
\begin{align*}
2\alpha<\si_\star<\ul\sigma,\quad\sup_{0\leq{x}
\leq%
\varrho/\at{2j_\star+1}}\Psi^\prime\at{x}\leq\bar\sigma-\sigma_\star.
\end{align*}
Combining this with \eqref{Eqn:SLim.Prop0}, \eqref{Eqn:SLim.Prop1}, and
$u\dblidx{N}{j}\geq{}u\dblidx{N}{j+1}$ gives
\begin{align*}
\at{\sigma_\star-\alpha}u\dblidx{N}{j}
&\leq%
\at{\sigma_N-\Psi^\prime\at{u\dblidx{N}{j}^2}- \alpha}
u\dblidx{N}{j}
\\&\leq%
\sigma_Nu\dblidx{N}{j}-\Psi^\prime\at{u\dblidx{N}{j}^2}
u\dblidx{N}{j}-\alpha{u\dblidx{N}{j+1}}
\leq%
\alpha{u\dblidx{N}{j-1}}
\end{align*}
and hence $u\dblidx{N}{j}\leq{\kappa}^{j-j_\ast}\sqrt{\varrho}$ with
$\kappa=\frac{\alpha}{\sigma_\ast-\alpha}<1$ for all $j$ with $j_\ast<j<N/2-2$. Finally,
\eqref{Lem:SLim.Decay.Eqn1} follows with $d=-\ln{\kappa}$ and $C$ sufficiently large.
\end{proof}
\begin{corollary}
\label{Cor:CompactnessMaxSequence}
Suppose that $\alpha$ and $\varrho$ are chosen such that \eqref{Cond:TSC} is satisfied. Then, the sequence $\at{u_N}_N$ is strongly
compact and each accumulation point $u_\infty$ satisfies $u_\infty\in\calC$,
$\norm{u_\infty}^2=\varrho$, and
$\calP_\infty\at{u_\infty}=\alpha\varrho{T}_\infty\pair{\alpha}{\varrho}$. Moreover,
$u_\infty$ decays exponentially and is a standing wave with frequency $\si_\infty>2\alpha$.
\end{corollary}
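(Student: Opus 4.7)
The plan is to leverage the uniform exponential decay \eqref{Lem:SLim.Decay.Eqn1} to upgrade pointwise boundedness of $\at{u_N}_N$ to strong $\ell^2$-compactness, then pass to the limit in the standing wave equation \eqref{Eqn:SLim.Prop1} and verify that the limiting energy equals $\alpha\varrho T_\infty\pair{\alpha}{\varrho}$.

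First, I would establish strong compactness. From \eqref{Eqn:SLim.Prop0} each coordinate sequence $\bat{u\dblidx{N}{j}}_N$ is bounded, so a diagonal extraction gives pointwise convergence along a subsequence to some $u_\infty$. The uniform tail bound $\sum_{\abs{j}>M}u\dblidx{N}{j}^2\leq{C^2}\sum_{\abs{j}>M}\exp\at{-2d\abs{j}}$ furnished by \eqref{Lem:SLim.Decay.Eqn1} is tight in $M$ independently of $N$, so pointwise convergence upgrades to strong convergence in $\ell^2_\infty$. The limit satisfies $\norm{u_\infty}^2=\varrho$ by norm continuity, $u_\infty\in\calC_\infty$ because $\calC_\infty$ is closed under pointwise (hence weak) convergence via Lemma \ref{Lem:BasicProps}, and $u\dblidx{\infty}{j}\leq{C}\exp\at{-d\abs{j}}$ by passing to the limit in \eqref{Lem:SLim.Decay.Eqn1}.

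Next, I would extract a convergent subsequence for the multipliers. The uniform sup-bound $u\dblidx{N}{j}^2\leq\varrho$ yields $\sum_{j}\Psi^\prime\at{u\dblidx{N}{j}^2}u\dblidx{N}{j}^2\leq\Psi^\prime\at\varrho\,\varrho$, while Cauchy-Schwarz gives $\alpha\calL_N\at{u_N}\leq2\alpha\varrho$, so $\si_N$ is uniformly bounded. Combined with $\liminf\si_N\geq\ul\si>2\alpha$, a further subsequence converges to some $\si_\infty>2\alpha$. Fixing any $j$ and letting $N\to\infty$ in \eqref{Eqn:SLim.Prop1} (valid once $\abs{j}\leq\at{N-1}/2$) and using continuity of $\Psi^\prime$ yields the limiting standing wave equation at every index.

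Finally, I would verify the energy value. By construction $\calP_N\at{\hat{u}_N}=\alpha\varrho{T_N\pair{\alpha}{\varrho}}$ for the periodic maximiser $\hat u_N$, and Lemma \ref{Lem:REProps} gives $\calP_\infty\at{u_N}=\calP_N\at{\hat u_N}+\nDo{1}$, which tends to $\alpha\varrho{T_\infty\pair{\alpha}{\varrho}}$ by Lemma \ref{Lem:ConvergenceSuprema}. On the other hand $\calP_\infty$ is continuous along the strongly convergent subsequence: the bilinear form $\calL_\infty$ is continuous in $\ell^2_\infty$, and $\calW_\infty\at{u_N}\to\calW_\infty\at{u_\infty}$ follows from dominated convergence with the uniform majorant $\Psi\nat{C^2\exp\at{-2d\abs{j}}}$. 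Consequently $\calP_\infty\at{u_\infty}=\alpha\varrho{T_\infty\pair{\alpha}{\varrho}}$.

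The main delicate point is the compactness step: without the strict maximum condition the sequence could lose mass to infinity in $\ell^2_\infty$ (the usual concentration-compactness dichotomy), and it is precisely the uniform exponential localisation derived from $\ul\si>2\alpha$ that prevents this and replaces the standard concentration-compactness machinery. Everything else amounts to limit-passing in continuous expressions.
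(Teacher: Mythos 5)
Your proposal is correct and follows essentially the same route as the paper: extract a pointwise (weakly) convergent subsequence, use the uniform exponential tail bound \eqref{Lem:SLim.Decay.Eqn1} to upgrade to strong $\ell^2$-convergence and to preserve the norm, pass to the limit in \eqref{Eqn:SLim.Prop1} for the standing wave equation and the frequency, and identify the limiting energy via Lemma \ref{Lem:REProps} and Lemma \ref{Lem:ConvergenceSuprema}. Your treatment of the multipliers $\si_N$ (uniform boundedness plus a further extraction) is slightly more explicit than the paper's, but it is the same argument in substance.
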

\begin{proof}
By compactness we can extract a (not relabelled) subsequence such that
$u_{N}\rightharpoonup{}u_\infty\in\calC$ in $\ell^2_\infty$, and this yields the pointwise 
convergence $u\dblidx{N}{j}\to{}u\dblidx{\infty}{j}$ for all $j$. The uniform tail estimate
\eqref{Lem:SLim.Decay.Eqn1} then implies $\norm{u_\infty}^2=\varrho$ and that $u_\infty$
decays exponentially for $j\to\pm\infty$. We conclude that $u_{N}\to{}u_\infty$ strongly in
$\ell^2_\infty$ as well as
\begin{math}
\calP_\infty\at{u_\infty}=\lim_{N\to\infty}\calP_\infty\at{u_{N}}=
\alpha\varrho{T}\pair{\alpha}{\varrho}>2\alpha\varrho,
\end{math} %
where we used \eqref{Eqn:SLim.Prop2}. Moreover, \eqref{Eqn:SLim.Prop1} with fixed $j$ and
\eqref{Eqn:SLim.Prop2} imply $\si_{N}\to\si_\infty$ for some $\si_\infty>2\alpha$, and
exploiting \eqref{Eqn:SLim.Prop1} for all $j$ we infer that $u_\infty$ is a standing wave
with frequency $\si_\infty$.
\end{proof}
We have now finished the existence proof for standing waves. In particular, Theorem
\ref{Intro:Result} follows from Lemma \ref{Lem:TProps2a}, Lemma \ref{Lem:TProps2b}, Theorem \ref{Theo:PeriodicWaves}, and Corollary \ref{Cor:CompactnessMaxSequence}. We finally recall that 
for given $\alpha>0$ the condition $\at{A3}$ on $\Psi$ implies \eqref{Cond:TSC} for all $\varrho>0$,
and hence the existence of homoclinic waves with arbitrary small energy.
%
%
%
%-------------------------------------------------------------------------------------------
\section{Numerical examples}\label{sec:ApproxWaves}
%-------------------------------------------------------------------------------------------
%
%
In this section we illustrate our analytical results by numerical simulations of standing
waves with $N<\infty$. To this end we define a map
$I:\calS_\dblprm{N}{\varrho}\to\calS_\dblprm{N}{\varrho}$ by
\begin{align*}
I\at{u}=\sqrt{\varrho}\frac{{u}+\tau{}F_N\at{u}}{\norm{{u}+\tau{}F_N\at{u}}},
\end{align*}
where $\tau>0$ is sufficiently small, and construct standing waves as limits of the iteration
\begin{align}
\label{Eqn:NumScheme}
u_0=u_\ini
,\quad
u_{k+1}=I\at{u_k}.
\end{align}
This scheme preserves the constraint $\norm{u}^2=\varrho$ exactly, and is a discrete analogue
to the gradient flow \eqref{Eqn:GradientFlow} due to
$I\at{u}=u+\tau{F_N}\at{u}+\nDo{\tau^2}$. In order to compute a good guess for the initial
profile $u_\ini$ we start with the ansatz
\begin{align*}
u\dblidx{\ini}{j}
=%
\ka_1+\ka_2\chi_j+\ka_3\at{1+\cos\at{\pi{j/N}}}+\ka_4\exp\bat{-20\at{{j/N}}^2}
,\quad
\chi_j=
\begin{cases}
1&\text{for} \abs{j}<1,\\0&\text{otherwise},
\end{cases}%
\end{align*}
where the parameters $\ka_i$ are positive and coupled by the constraint
$\norm{u_\ini}^2=\varrho$. Then we sample the set of all admissible parameters by about $100$
points, and solve the discrete maximization $\calP_n\at{u_\ini\at{\ka_i}}\to\max$ to find the
optimal values for the parameters $\ka_i$.
\begin{figure}[ht!]
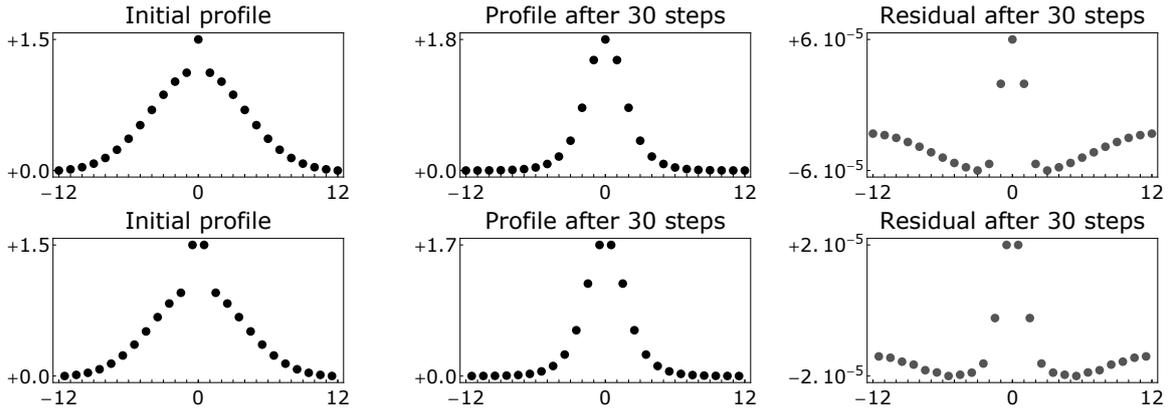
%
\centering{%
\begin{minipage}[c]{0.3\textwidth}%
\includegraphics[width=\textwidth, draft=\figdraft]%
{\figfile{ex_1a_ini}}%
\end{minipage}%
\hspace{0.025\textwidth}%
\begin{minipage}[c]{0.3\textwidth}%
\includegraphics[width=\textwidth, draft=\figdraft]%
{\figfile{ex_1a_prof}}%
\end{minipage}%
\hspace{0.025\textwidth}%
\begin{minipage}[c]{0.3\textwidth}%
\includegraphics[width=\textwidth, draft=\figdraft]%
{\figfile{ex_1a_res}}%
\end{minipage}%
\\%
\begin{minipage}[c]{0.3\textwidth}%
\includegraphics[width=\textwidth, draft=\figdraft]%
{\figfile{ex_1b_ini}}%
\end{minipage}%
\hspace{0.025\textwidth}%
\begin{minipage}[c]{0.3\textwidth}%
\includegraphics[width=\textwidth, draft=\figdraft]%
{\figfile{ex_1b_prof}}%
\end{minipage}%
\hspace{0.025\textwidth}%
\begin{minipage}[c]{0.3\textwidth}%
\includegraphics[width=\textwidth, draft=\figdraft]%
{\figfile{ex_1b_res}}%
\end{minipage}%
}%
\caption{%
Periodic on-site (top row, $N=25$) and inter-site (bottom row, $N=24$) waves for
the data from \eqref{Prm:NumEx1} with $\tau=1$.
}%
\label{Fig:NumEx1}%
\end{figure}%
\begin{figure}[ht!]
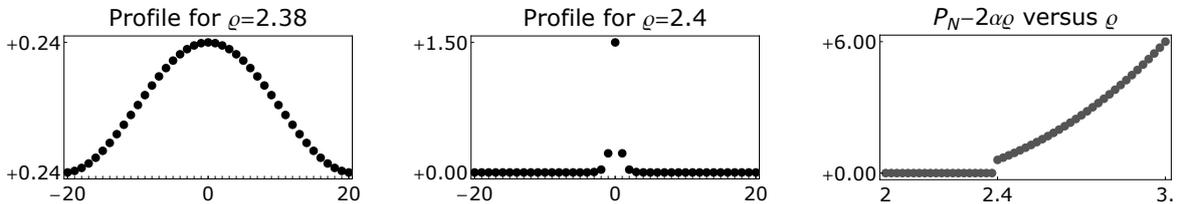
%
\centering{%
\begin{minipage}[c]{0.3\textwidth}%
\includegraphics[width=\textwidth, draft=\figdraft]%
{\figfile{ex_2_prof_1}}%
\end{minipage}%
\hspace{0.025\textwidth}%
\begin{minipage}[c]{0.3\textwidth}%
\includegraphics[width=\textwidth, draft=\figdraft]%
{\figfile{ex_2_prof_2}}%
\end{minipage}%
\hspace{0.025\textwidth}%
\begin{minipage}[c]{0.3\textwidth}%
\includegraphics[width=\textwidth, draft=\figdraft]%
{\figfile{ex_2_energy}}%
\end{minipage}%
}%
\caption{%
Periodic on-site waves for the data from \eqref{Prm:NumEx2} with
$\tau=1$ and $N=41$.
}%
\label{Fig:NumEx2}%
\end{figure}%
\bigpar%
Figure \ref{Fig:NumEx1} shows numerical results for
\begin{align}
\label{Prm:NumEx1}
\Psi\at{x}=x-\arctan{x},\quad\alpha=1,\quad\varrho=10,
\end{align}
and provides evidence that the algorithm \eqref{Eqn:NumScheme} produces a standing wave in
both the on-site and inter-site setting. Notice that $\Psi$ satisfies Assumption
\ref{Intro:Ass} and is saturable due to $\lim_{x\to\infty}\Psi^\prime\at{x}=1$.
\par
A second example concerns
\begin{align}
\label{Prm:NumEx2}
\Psi\at{x}
=%
\exp\at{x}-\tfrac{1}{2}x^2-x-1,\quad\alpha=1,\quad\varrho\in\ccinterval{2}{3}.
\end{align}
and is shown in Figure \ref{Fig:NumEx2}. The simulations indicate that the periodic on-site waves for
\eqref{Prm:NumEx2} exhibit quite different properties for small and large values of
$\varrho$: For $\varrho\leq2.38$ we observe that almost all lattice sites are excited
and that the profile
has small amplitude and is almost constant. Moreover, 
the energy $P_N=\calP_N\at{u_N}$ is only slightly larger than $2\alpha\varrho$. For
$\varrho\geq2.4$, however, the profile is strongly localized and $P_N$ is considerably
larger than $2\alpha\varrho$. We therefore expect that for $N\to\infty$ the periodic
waves converge pointwise to zero and a non-trivial homoclinic wave for small and large
$\varrho$, respectively. Notice that this is in accordance with our theoretical results: 
Since we have $\Psi\at{x}\sim\tfrac{1}{6}x^3$ for small $x$, the existence of homoclinic waves is
guaranteed only for sufficiently large $\varrho$.
\begin{figure}[ht!]
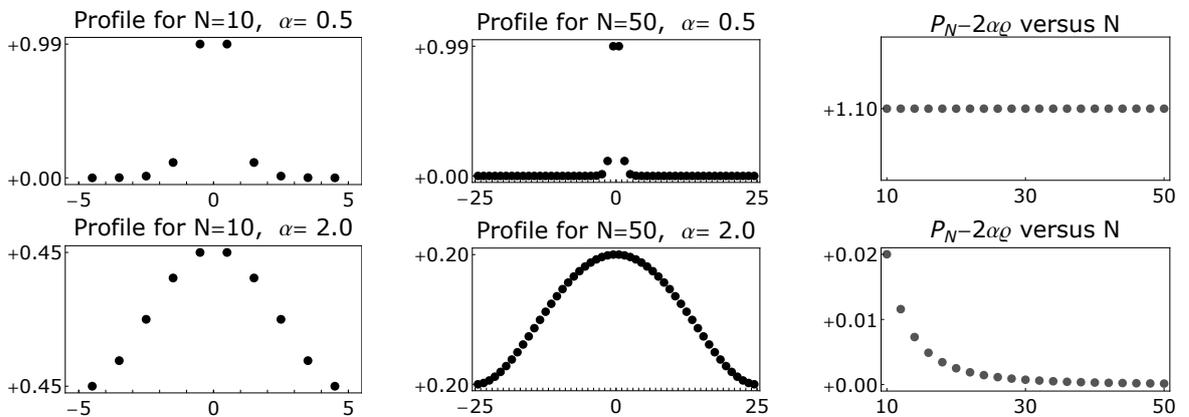
%
\centering{%
\begin{minipage}[c]{0.3\textwidth}%
\includegraphics[width=\textwidth, draft=\figdraft]%
{\figfile{ex_3a_prof1}}%
\end{minipage}%
\hspace{0.025\textwidth}%
\begin{minipage}[c]{0.3\textwidth}%
\includegraphics[width=\textwidth, draft=\figdraft]%
{\figfile{ex_3a_prof2}}%
\end{minipage}%
\hspace{0.025\textwidth}%
\begin{minipage}[c]{0.3\textwidth}%
\includegraphics[width=\textwidth, draft=\figdraft]%
{\figfile{ex_3a_energy}}%
\end{minipage}%
\\%
\begin{minipage}[c]{0.3\textwidth}%
\includegraphics[width=\textwidth, draft=\figdraft]%
{\figfile{ex_3b_prof1}}%
\end{minipage}%
\hspace{0.025\textwidth}%
\begin{minipage}[c]{0.3\textwidth}%
\includegraphics[width=\textwidth, draft=\figdraft]%
{\figfile{ex_3b_prof2}}%
\end{minipage}%
\hspace{0.025\textwidth}%
\begin{minipage}[c]{0.3\textwidth}%
\includegraphics[width=\textwidth, draft=\figdraft]%
{\figfile{ex_3b_energy}}%
\end{minipage}%
}%
\caption{%
Periodic inter-site waves
for the data from \eqref{Prm:NumEx3} with $\tau=1$ and several values for $N$.
Top and bottom row correspond to  $\alpha=0.5$ and $\alpha=2.0$, respectively.
}%
\label{Fig:NumEx3}%
\end{figure}%
\par
A similar phenomenon can be observed in Figure \ref{Fig:NumEx3}, which illustrates the limit
$N\to\infty$ for 
\begin{align}
\label{Prm:NumEx3}
\Psi\at{x}=x^4,\quad\alpha\in\{1/2,\,2\},\quad\varrho=2\,.
\end{align}
For sufficiently large $\alpha$ we have $P_N\to2\alpha\varrho$ as $N\to\infty$ and the periodic waves converge (weakly in
$\ell^2$) to zero. If $\alpha$ is sufficiently small, however, we have
$\lim_{N\to\infty}\si_N>2\alpha\varrho$ and the periodic waves converge (strongly in
$\ell^2$) to a non-trivial homoclinic wave.
%
%-------------------------------------------------------------------------------------------
%                      Thanks to
%-------------------------------------------------------------------------------------------
%
\section*{Acknowledgements}%
I am very grateful to the referee for the constructive criticism which allowed me to
improve the results and the exposition.
This work was supported by the EPSRC
Science and Innovation award to the Oxford Centre for
Nonlinear PDE (EP/E035027/1).%
%
%-------------------------------------------------------------------------------------------
%                      Literature
%-------------------------------------------------------------------------------------------
%
\providecommand{\bysame}{\leavevmode\hbox to3em{\hrulefill}\thinspace}
\providecommand{\MR}{\relax\ifhmode\unskip\space\fi MR }
% \MRhref is called by the amsart/book/proc definition of \MR.
\providecommand{\MRhref}[2]{%
  \href{http://www.ams.org/mathscinet-getitem?mr=#1}{#2}
}
\providecommand{\href}[2]{#2}

\end{document}